\def\plus#1#2{\vrule height#1pt width0pt depth#2pt}
\def\@{\hskip.8pt}
\def\?{\hskip.3pt}
\theoremstyle{plain}
\newtheorem{theorem}{Theorem}[section]
\newtheorem{definition}{Definition}[section]
\newtheorem{proposition}{Proposition}[section]
\newtheorem{remark}{Remark}[section]
\numberwithin{equation}{section}
\newcommand{\rarw}[1]{\overset{#1}{\longrightarrow}}
\newcommand{\de}{\partial}
\def\A{\mathcal{A}}
\def\B{\mathcal{B}}
\def\AA{\hat{\mathcal{A}}}
\def\C{\mathcal{C}}
\def\CC{\mathscr{C}}
\newcommand{\D}[1]{\frac{D #1}{D t}}
\def\F{\mathscr{F}}
\def\g{\gamma}
\def\GAMMA{\hat{\gamma}}
\def\H{\mathcal{H}}
\def\Ham{\mathscr{H}}
\def\I{\mathcal{I}}
\def\L{\mathcal{L}}
\def\Lagr{\mathscr{L}}
\def\Q{\dot{q}}
\def\R{\mathds{R}}
\def\SS{\mathcal{S}}
\def\tgamma{{\tilde{\gamma}}}
\def\ThPPC{\Theta_{\text{\tiny PPC}}}
\def\u{\upsilon}
\def\U{\dot{u}}
\def\V{\mathcal{V}_{n+1}}
\def\Vg{V\?(\gamma)}
\def\Ag{A\@(\GAMMA)}
\def\d#1/d#2{\frac{d\/#1}{d\/#2}}
\def\de#1/de#2{\frac{\partial\/#1}{\partial\/#2}}
\def\SD#1/de#2/de#3{\ifx#2 \frac{\plus02\partial^{\@\@2}#1}
    {\plus90\partial\@#3^{\@2}} \else\frac{\plus02\partial^{\@\@2}#1}
    {\partial\?#2\partial\?#3}\fi}
\def\TD#1/de#2/de#3/de#4{\ifx#2 \frac{\plus02\partial^{\@\@3}#1}%
    {\partial\?{#3}^2\partial\?#4} \else\frac{\plus02\partial^{\@\@3}#1}%
    {\partial\?#2\partial\?#3\partial\?#4}\fi}
\def\D#1/D#2{\frac{D\/#1}{D\/#2}}
\def\DD#1/D#2{\textstyle{\text{\Large$\D{#1}/D{#2}$}}}
\def\dd#1/d#2{\textstyle{\text{\Large$\d{#1}/d{#2}$}}}
\def\De#1/de#2{\textstyle{\text{\Large$\de{#1}/de{#2}$}}}
\def\sD#1/de#2/de#3{\textstyle{\text{\Large$\SD{#1}/de{#2}/de{#3}$}}}
\def\tD#1/de#2/de#3/de#4{\textstyle{\text{\Large$\TD{#1}/de{#2}/de{#3}/de{#4}$}}}
\renewenvironment{subequations}{%
  \refstepcounter{equation}%
  \protected@edef\theparentequation{\theequation}%
  \setcounter{parentequation}{\value{equation}}%
  \setcounter{equation}{0}%
  \def\theequation{\theparentequation\hspace{1pt}\alph{equation}}%
  \ignorespaces
}{%
  \setcounter{equation}{\value{parentequation}}%
  \ignorespacesafterend
} \catcode`\@=12
\def\Ref#1#2{\if#2)\ref{#1}#2\else\ref{#1}\@#2\fi}
\def\inizio#1{\ifx)#1)\endzag\else\hskip.1pt,\let\zag=\Zag\fi\zag}
\def\zig#1{\ifx)#1)\endzag\else\hskip.1pt,\fi\zag}
\def\zag{\relax}
\def\Zag#1{\@#1\zig}
\def\endzag{\let\zag=\relax}
\def\eref#1#2{\ref{#1}\ifx,#2\hspace{.1pt},\let\inizio=\Zag\let\zag=\Zag\else\@#2\fi\inizio}
\begin{document}

%\layout

\markboth{E. Pagani \textit{et al.} } {On the gauge structure of the
calculus of variations with constraints}

\title{ON THE GAUGE STRUCTURE OF THE CALCULUS OF 
VARIATIONS WITH CONSTRAINTS}

\author{D. Bruno\?,\, G. Luria\?,\, E. Pagani }
\address{Dipartimento di Matematica dell'Universit\`a di Genova \\
        Via Dodecaneso, 35 - 16146 Genova (Italia)}
\email{bruno@dima.unige.it}
\address{DIPTEM Sez. Metodi e Modelli Matematici\@, Universit\`a di Genova\\
         Piazzale Kennedy, Pad. D - 16129 Genova (Italia)}
         \email{luria@diptem.unige.it}
\address{Dipartimento di Matematica dell'Universit\`a di Trento \\
        Via Sommarive, 14 - 38050 Povo di Trento (Italia)}
\email{pagani@science.unitn.it}

\begin{abstract}
A gauge--invariant formulation of constrained variational calculus,
based on the introduction of the bundle of \emph{affine scalars}
over the configuration manifold, is presented. In the
resulting setup, the ``\?Lagrangian\/'' $\Lagr\?$ is replaced by a
section of a suitable principal fibre bundle over the velocity
space. A geometric rephrasement of Pontryagin's maximum
principle, showing the equivalence between a constrained variational
problem in the state space and a canonically associated 
free one in a higher affine bundle,
is proved.\\

\noindent
{\footnotesize \textit{Keywords:} Constrained calculus of variations; Optimal control
theory; Gauge theory; First variation.}

\vspace{2pt}\noindent
{\footnotesize Mathematical Subject Classification 2010: 49J, 70Q05, 37J60,
       70H03, 53B05, 53D10}

\end{abstract}

\maketitle

%%%%%%%%%%%%%%%%%%%%%%%%%%%%%%%%%%%%%%%%%%%%%%%%%%%%%%$$%%%%%%%%%%%%%%%%%%%%%%%%%%%%%%
% MSC 2010 :
%
% 37Jxx - Finite-dimensional Hamiltonian, Lagrangian, contact, and
%         nonholonomic systems
%% 37J60 - Nonholonomic dynamical systems
%
% 49XX  - Calculus of variations and optimal control; Optimization
%% 49Jxx - Existence theories
%
% 53Bxx - Local differential geometry
%% 53B05 - Linear and affine connections
%
% 53Dxx - Symplectic geometry, contact geometry
%% 53D10 - Contact manifolds, general
%
% 65Kxx - Mathematical programming, optimization and variational techniques
%% 65K10 - Optimization and variational techniques
%
% 70Hxx - Hamiltonian and Lagrangian mechanics
%% 70H03 - Lagrange's equations
%
% 70Q05 - Control of mechanical systems
%%%%%%%%%%%%%%%%%%%%%%%%%%%%%%%%%%%%%%%%%%%%%%%%%%%%%%%%%%%%%%%%%%%%%%%%%%%%%%%%%%%%%

\section*{Introduction}
Almost ten years ago, a mathematical setting for a
formulation of Classical Mechanics, automatically embodying its gauge
invariance, has been introduced \cite{MPL,MVB}. Besides that, a more
recent paper \cite{MBP} proposes a geometric revisitation
of the calculus of variations in the presence of
non--holonomic constraints.

The present work uses the arguments of
\cite{MBP} and the geometrical framework provided by
\cite{MPL,MVB} to analyze the underlying gauge structure of constrained variational calculus.
As a by--product, the resulting scheme allows to obtain the relevant ingredients that are 
commonly used in the variational context as \emph{canonical} geometrical objects.
 
The topic will be developed
within the family of \emph{differentiable} curves alone, thus avoiding
all the issues coming from the possible presence of corners which are 
of poor significance from the gauge--theoretical point of view. 
The extension to the piecewise differentiable case, surely 
interesting on its own, can however be easily pursued.

\smallskip
Consider an abstract system $\@\B\?$, subject to a set of differentiable 
and possibly non--holonomic constraints, and let $\?\A\@$ denote 
its admissible velocity space. 

Then, define an \emph{action functional} by integrating
a suitable differentiable ``\@cost function\?'' on 
$\@\A\@$ --- called the \emph{Lagrangian} --- along 
the admissible evolutions of the system.
In its essential features, the variational problem we shall deal with is the one of
characterizing, among all those evolutions of
$\?\B\@$ which fulfil the restrictions imposed by  
the assigned constraints and connect a fixed pair of
configurations, the extremals (\?if at all\?) of the given action
functional. 
As we shall see, the whole topic has 
very close links with optimal control theory. 

\smallskip
For the sake of convenience, the first part of
the paper shall be devoted as a reference tool consisting of a brief review
of a few basic aspects of jet--bundle geometry and 
non--holonomic geometry %\cite{MP2,Saunders} 
as well as of
those contents of \cite{MBP,MPL,MVB} that will be involved in 
the subsequent discussion. 
This will include, among other things, a revisitation of 
the Lagrangian and Hamiltonian bundles.
All results will be stated without any proofs nor comments.

\smallskip
Afterwards, we shall develop an algorithm able to establish a 
canonical correspondence between
the input data of the problem, namely the constraints and
the  Lagrangian function, and an alternative
\emph{free} variational problem over a distinguished overlying affine bundle
$\@\CC\/(\A)\to \A\@$. In the resulting setup, under
suitable hypotheses, the gauge--independent problem
in $\@\CC\/(\A)\@$ is proved to be equivalent to the actual
constrained one. This clarifies the geometrical
essence of Pontryagin's method, based on the
introduction of the \emph{costates}.

\smallskip
As the final element, the circumstances under which the
correspondence between the solutions of the variational 
problem in $\@\A\@$ and those of its associated problem in $\@\CC\/(A)\@$
is $\?1-1\@$ are investigated.

%%%%%%%%%%%%%%%%%%%%%%%%%%%%%%%%%%%%%%%%%%%%%%%%%%%%%%%%%%%%%%%%%%%%%%%%%%%%%

\vspace{2pt}
\section{Geometric setup}
\subsection{Preliminaries}
Let $\@\V\rarw{t} \R\@$ denote an $\@(n + 1)$--dimensional fibre
bundle, henceforth called the \emph{event space} and referred to
local fibred coordinates $\@t, q^1, \ldots, q^n\@$.

Every section $\@\gamma \colon \R \to \V\@$, locally described as
$\@q^i = q^i(t)\@$, will be interpreted as an \emph{evolution} of an
abstract system $\@\B\@$, parameterized in terms of the independent
variable $\@t\@$. The first jet--space $\@j_1(\V) \rarw{\pi} \V\@$
is an affine bundle over $\@\V\@$, modelled on the vertical space
$\?V(\V)\@$.\linebreak By the very definition of jet--bundle, every section
$\@\gamma \colon \R \to \V\@$ may be lifted to a section
$\@j_1(\gamma) \colon \R \to j_1(\V)\@$, simply by assigning to each
$\@t \in \R\@$ the tangent vector to $\gamma\@$. The section
$\@j_1(\gamma)\@$ will be called the \emph{jet--extension} of
$\@\gamma\@$ on $\@j_1(\V)\@$\vspace{1pt}.

\vspace{1pt}
Both spaces $\@j_1(\V)\@$ and $\@V(\V)\@$ may be viewed as
submanifolds of the tangent space $\@T(\V)\@$ according to the
identifications
\begin{subequations}
\begin{align}
j_1(\V) &= \left\{\@ z \in T(\V)\ \vline\ \left<\?z\?,\?dt\?\right>
= 1 \@\right\}
\label{Eq_1.1a}\\[2pt]
V(\V) &= \left\{ {\textrm{v}} \in T(\V)\ \vline\
\left<\?{\textrm{v}}\?,\?dt\?\right> = 0 \@\right\} \label{Eq_1.1b}
\end{align}
\end{subequations}

\vspace{2pt}
The terminology is borrowed from Classical Mechanics, where 
$\@\B\@$ is identified with a material system,
the manifold $\@\V\,$ with its configuration space--time, the
projection $\/t:\V\to\R\@$\vspace{1pt} with the \emph{absolute
time\/} function and the jet--space $\@j_1\/(\V)\@$ with the
\emph{velocity space} of $\@\B\@$.

\smallskip The geometry of the manifold $j_1\/(\V)$ will be regarded
as known. The reader is referred to \cite{MBP,Saunders} for the notation, the
terminology and a thorough analysis. Unless otherwise
stated, given any set of local coordinates on $\?\V\@$, the 
corresponding local jet--coordinate system on
$j_1\/(\V)$ will be denoted by $t, q^1, \ldots, q^n, \Q^1, \ldots,
\Q^n$.

\bigskip \noindent The dual of the vertical bundle, henceforth
denoted by $V^*\/(\V) \rarw{\pi} \V$, is, in view of
eq.\@\eqref{Eq_1.1b}, canonically isomorphic to the quotient of the
cotangent space $T^*\/(\V)$ by the equivalence relation
\begin{equation*}
  \sigma \sim \sigma'\quad \Longleftrightarrow \quad \left\{\begin{array}{ll}
  \pi(\sigma) = \pi(\sigma')\\[0.3em] \sigma - \sigma' \propto dt_{\, \vline\, \pi(\sigma)}
  \end{array}\right.
\end{equation*}
Every local coordinate system $\@t, q^i\@$ in $\@\V\@$ induces
fibred coordinates $\@t, q^i, \hat{p}_i\@$ in $\@V^*(\V)\@$, with
\[
\hat{p}_i(\hat{\sigma})\, :=\, \left<\?\hat{\sigma}\? ,\? \left(\de /de
{q^i}\right)_{\pi(\hat{\sigma})}\?\right> \qquad \forall\
\hat{\sigma} \in V^*(\V)
\]
and transformation laws
\begin{equation}\label{Eq_1.2}
  \bar{t} = t + c\ ,\qquad \bar{q}\@^i = \bar{q}\@^i(t, q^1, \ldots, q^n)\ ,
\qquad \bar{\hat{p}}_i = \hat{p}_k\, \de {q^k} /de {\bar{q}\@^i}
\end{equation}

\medskip The annihilator of the tangent distribution to the totality
of the jet--extensions of sections $\@\gamma\@$ is a subspace
$\@\C\/\big(j_1\/(\V)\big)\@$ of $\@T^*\/\big(j_1\/(\V)\big)\@$,
called the \emph{contact bundle}.

\smallskip \noindent Alternatively, this last may be seen as the pull--back of
the space $V^*\/(\V)$ through the map $j_1\/(\V) \rarw{\pi} \V$.
As such, $\@\C\/\big(j_1\/(\V)\big)\@$ is, at the same time, a vector
bundle over $\?j_1\/(\V)\@$ and an affine bundle
over $\?V^*\/(\V)\@$. The manifold
$\@\C\/\big(j_1\/(\V)\big)\@$ will be referred to coordinates $t,
q^i, \Q^i, \hat{p}_i$, related in an obvious way to those
in $\@j_1\/(\V)\,$ and in $\@V^*\/(\V)\,$. 
Every $\@\sigma \in \C\/\big(j_1\/(\V)\big)\@$ will
be called a \emph{contact $1$--form} over $\?j_1\/(\V)\@$.

%%%%%%%%%%%%%%%%%%%%%%%%%%%%%%%%%%%%%%%%%%%%%%%%%%%%%%%%%%%%%%%%%%%%%%%%%%%%%

\subsection{Non--holonomic constraints}\label{SSEc_Non-holonomic constraints}
Let $\@\A\@$ denote an embedded submanifold of $\@j_1(\V)\@$, fibred
over $\@\V\@$. The situation, summarized into the following commutative
diagram
\begin{equation}\label{Diag_1.4}
  \begin{CD}
    \A @>i>> j_1(\V) \\ @V{\pi}VV @VV{\pi}V \\ \V @= \V
  \end{CD}
\end{equation}
provides the natural setting for the study of non--holonomic
constraints.

\smallskip
 The manifold $\@\A\@$ is referred to local fibred coordinates $\@t, q^1,
\ldots, q^n, z^1, \ldots, z^r\@$ with transformation laws
\begin{equation}\label{Eq_1.5}
\bar{t} = t + c\,,\quad\;\bar{q}\@^i=\bar{q}\@^i\/(t,q^1,\ldots,q^n)\,,
\quad\;\bar{z}\@^A=\bar{z}\@^A\/(t,q^1,\ldots,q^n,z^1,\ldots,z^r)
\end{equation}
while the imbedding $\@i: \A \to j_1(\V)\@$ is locally expressed as
\begin{subequations}\label{Eq_1.6}
\begin{equation}\label{Eq_1.6a}
  \Q^i = \psi^i(t, q^1, \ldots, q^n, z^1, \ldots, z^r)\ ,\quad i = 1, \ldots, n
  \quad ,\quad \text{rank}\, \bigg\| \de {\left(\psi^1\, \cdots\,
\psi^n\right)} /de {\left(z^1\, \cdots\, z^r\right)} \bigg\| = r
\end{equation}
or, alternatively, may be implicit represented as
\begin{equation}\label{Eq_1.6b}
    g^\sigma \left(t, q^1, \ldots, q^n, \Q^1, \ldots, \Q^n\right) = 0\ , \ \ \sigma = 1, 
    \ldots, n - r\ ,\ \ \text{rank}\, \Big\| \de {\left(g^1\, \cdots\,
g^{n-r}\right)} /de {\left(\Q^1\, \cdots\, \Q^n\right)} \Big\| = n -r
\end{equation}
\end{subequations}
In the following, we shall not distinguish
between the manifold $\@\A\@$ and its image $\@i(\A) \subset
j_1(\V)\@$.

\medskip
In the presence of non--holonomic constraints, an evolution
$\@\gamma \colon \R \to \V\@$ is called
\emph{admissible} if and only if
its first jet--extension is contained in $\@\A\@$, namely if there
exists a section $\@\GAMMA \colon \R \to \A\@$ satisfying
$\@j_1(\pi\@\cdot\@\GAMMA) = i\@\cdot\@\GAMMA\@$. %With this notation, expressed 
Expressing any section $\@\GAMMA\@$ in coordinates as
$\@q^i =q^i\/(t)\,,\,z^A=z^A\/(t)\,$, the admissibility requirement
takes the explicit form
\begin{equation}\label{Eq_1.7}
\d {q^i} /d t \,=\, \psi\?^i\/(t,q^1\/(t),\ldots,q^n\/(t),z^1\/(t),\ldots,z^r\/(t)\@)
\end{equation}

\medskip \noindent The concepts of vertical vector and contact
$1$--form are easily extended to the submanifold $\A\@$: as usual,
the vertical bundle $\@V(\A)\@$ is the kernel of the push--forward
$\@\pi_*\colon T(\A) \to T(\V)\@$ while the contact bundle
$\@\C(\A)\@$ is the pull--back on $\A$ of the bundle
$\@\C(j_1(\V))\@$, as expressed by the commutative diagram
\begin{equation}\label{Diag_1.8}
  \begin{CD}
    \C(\A) @>>> \C(j_1(\V)) @>>> V^*(\V) \\ @VVV @VVV @VVV \\
    \A @>i>> j_1(\V) @>\pi>> \V
  \end{CD}
\end{equation}
The latter allows to regard the contact bundle $\@\C(\A)\@$ as a
fibre bundle over the space $\@V^*(\V)\@$, identical to the
pull-back of $\@V^*(\V)\@$ through the map %$\@\A \rarw{\pi} \V\@$.
$\@\A \to \V\@$.

%%%%%%%%%%%%%%%%%%%%%%%%%%%%%%%%%%%%%%%%%%%%%%%%%%%%%%%%%%%%%%%%%%%%%%%%%%%%%%
\subsection{Infinitesimal deformations of sections}
\label{SSec_Infinitesimal deformations of sections}
\textbf{(\/i\/)} \,Quite generally, given a section $\@\g:\R\to \V$,\linebreak
a (\@weak\@) \emph{deformation\/} of $\@\g\@$ is a 1--parameter
family of sections
$\@\g\?_\xi\,,\;\xi\in(-\varepsilon,\varepsilon)\,$ depending
differentiably on $\/\xi\@$ and satisfying $\@\g\?_0=\g\@$.

In the presence of non--holonomic constraints,
a deformation
$\@\gamma\?_\xi\,$ is called \emph{admissible\/} if and only if each
section $\@\gamma\?_\xi:\R\to\V\,$ is admissible in the sense of
\S\@\ref{SSEc_Non-holonomic constraints}. In a similar way,
a deformation $\@\GAMMA\?_\xi\@$ of an admissible section
$\@\GAMMA:\R\to\A\,$ is called admissible if and only if all
sections $\@\GAMMA\?_\xi:\R\to\A\,$ are admissible.

By definition, the
admissible sections $\@\gamma:\R\to\V\,$ are in 1--1 correspondence
with the admissible sections $\@\GAMMA:\R\to\A\,$ through the
relations
\begin{equation}\label{Eq_gamma<=>GAMMA}
\gamma=\pi\cdot\GAMMA\;,\qquad\quad j_1(\gamma)=i\cdot\GAMMA
\end{equation}
Every admissible deformation of $\@\gamma\@$ may therefore be
expressed as $\@\gamma\?_\xi=\pi\cdot\GAMMA\?_\xi\@$, being
$\@\GAMMA\?_\xi:\R\to\A\,$ an admissible deformation of
$\@\GAMMA\@$.

In coordinates, preserving the representation
$\@\GAMMA:\@q^i=q^i\/(t)\,,\,z^A=z^A\/(t)\,$, the admissible
deformations of $\?\GAMMA\@$ are described by equations of the form
\begin{equation}
\GAMMA\?_\xi\@:\qquad q^i=\varphi^i\/(\xi,t)\,,\quad z^A=\zeta^A\/(\xi,t)
\end{equation}
subject to the conditions
\begin{subequations}
\begin{align}
 & \varphi^i\/(0,t)\@=\@q^i\/(t)\,,\quad \zeta^A\/(0,t)\@=\@z^A\/(t) \\[4pt]
 & \de\varphi^i/de t\,=\,\psi^i\/(t,\varphi^i\/(\xi,t),\zeta^A\/(\xi,t))
 \label{Eq_Ammissibilit GAMMA_xi}
\end{align}
\end{subequations}

\medskip
For each $\/t\in \R\?$, the curve $\@\xi\to\GAMMA\?_\xi\/(t)\,$ is
called the \emph{orbit\/} of the deformation $\@\GAMMA\?_\xi\@$ through
the point $\?\GAMMA\/(t)\,$. The vector field along $\@\GAMMA\@$ tangent
to the orbits at $\?\xi=0\,$ is called the \emph{infinitesimal
deformation\/} associated with $\@\GAMMA\?_\xi\@$.

Setting
\begin{equation}\label{Eq_X^i,Gamma^A}
X^i\/(t):=\left(\de\@\varphi^i/de{\?\xi\;}\right)_{\!\xi=0}\;\hskip4.32pt,\qquad\  \Gamma^A\/(t):=
\left(\de\@\zeta\/^A/de{\?\xi\;}\right)_{\!\xi=0}
\end{equation}
the infinitesimal deformation tangent to $\@\GAMMA_\xi\@$ is
described by the vector field
\begin{equation}\label{Eq_Deformaz infinitesima GAMMA}
\hat{X}\,=\,
X^i\/(t)\left(\de/de{q^i}\right)_{\!\GAMMA}\,+\,\Gamma^A\/(t)\left(\de/de{z^A}\right)_{\!\GAMMA}
\end{equation}
while equation  \eqref{Eq_Ammissibilit GAMMA_xi} is reflected into
the relation
\begin{equation}
\d X^i/d t\,=\,\left.\de /de {\?t\;}\@ \de {\@\varphi^i} /de {\?\xi\;}\@\right|_{\xi=0}\,=
\,\left(\de\@\psi^i/de{q^k}\right)_{\!\GAMMA}X^k\,
+\,\left(\de\@\psi^i/de{z^A}\right)_{\!\GAMMA}\Gamma^A\\[5pt]
\end{equation}
commonly referred to as the \emph{variational
equation\/}.

\noindent The infinitesimal deformation tangent to the projection
$\@\gamma\?_\xi=\pi\cdot\GAMMA\?_\xi\,$ is similarly described by
the field
\begin{equation}
X=\pi_*\@\hat{X}\,=\,
\left(\de\@\varphi^i/de{\?\xi\;\,}\right)_{\!\xi=0}\,\left(\de/de{q^i}\right)_{\!\gamma}=\,
X^i\/(t)\;\left(\de/de{q^i}\right)_{\!\gamma}
\end{equation}

\bigskip
\noindent \textbf{(\/ii\/)} \,We now want to sketch out the
construction of a suitable geometrical environment for the
description of such infinitesimal
deformations.
In order to do so, we denote by
$\@\Vg\xrightarrow{t\,}\R\@$ the vector bundle over $\@\R\@$ formed
by the totality of vertical vectors along $\@\gamma\@$, and by
$\@\Ag\xrightarrow{t\,}\R\@$\vspace{1pt} the analogous bundle formed
by the totality of vectors along $\@\GAMMA\@$ annihilating the
$1$--form $d\/t\?$. Both bundles  are referred to fibred
coordinates --- the former to $\?t,v^i\@$ and the latter to
$\?t,v^i\!,w^A\@$ --- according to the prescriptions
\begin{alignat*}{2}
  & X\in\Vg\quad\Longleftrightarrow\quad &&
  X\,=\,v^i\/(X)\?\bigg(\de /de {q^i}\bigg)_{\!\gamma\/(t\/(X))}
  \\[5pt]
  & \hat{X}\in\Ag\quad\Longleftrightarrow\quad &&
  \hat{X}\,=\,v^i\/(\hat{X})\?\bigg(\de /de {q^i}\bigg)_{\!\GAMMA\/(t\/(\hat{X}))}+\;
  w^A\/(\hat{X})\?\bigg(\de /de {z^A}\bigg)_{\!\GAMMA\/(t\/(\hat{X}))}\quad
\end{alignat*}

As proved in \cite{MBP}\?, the
first jet--bundle $\@j_1(\Vg)\@$ is canonically isomorphic to the
space of vectors along the jet--extension $\@j_1(\gamma)\@$ annihilating the $1$--form
$\?dt\/$.

In jet--coordinates, the identification is expressed by
the relation
\begin{equation*}
Z\in j_1(\Vg)\quad \Longleftrightarrow \quad Z\,=\,v^i\/(Z)\?\bigg(\de /de {q^i}\bigg)_{\!j_1(\gamma)\/(t\/(Z))} +
\;\dot{v}^i\/(Z)\?\bigg(\de /de {\Q^i}\bigg)_{\!j_1(\gamma)\/(t\/(Z))}
\end{equation*}
The push--forward of the imbedding $\@\A\xrightarrow{i\,}j_1(\V)\@$,
restricted to the subspace $\@\Ag\subset T\?(\A)\@$, makes the
latter into a subbundle of $\@j_1({\Vg})\@$. This gives rise to a
fibred morphism
\begin{subequations}
\begin{equation}\label{Diag_1.12a}
\begin{CD}
\Ag          @>{i_*}>>       j_1(\Vg)      \\
@V{\pi_*}VV                @VV{\pi_*}V     \\
\Vg\;\;      @=             \Vg
\end{CD}
\end{equation}
expressed in coordinates as
\begin{equation}\label{Eq_1.12b}
\dot{v}^i=\bigg(\de {\psi^i} /de {q^k}\bigg)_{\!\GAMMA}\@v^k +
\bigg(\de {\psi^i} /de {z^A}\bigg)_{\!\GAMMA}\@w^A
\end{equation}
\end{subequations}

\bigskip
\noindent All previous results are then summarized into the following

\begin{proposition}\label{Prop_Infinitesimal deformations}
Let $\@\gamma:\R\to\V\,$ and $\@\GAMMA:\R\to\A\,$ denote two
admissible sections, related by equation (\ref{Eq_gamma<=>GAMMA}).
Then:
\begin{enumerate}
\item[i)]
the infinitesimal deformations of $\?\gamma\?$ and those of
$\?\GAMMA\?$ are respectively expressed as sections
$\@X:\R\to\Vg\@$ and $\@\hat{X}:\R\to\Ag\@$;
\item[ii)]
a section $\?X:\R\to\Vg\?$ represents an \emph{admissible\/}
infinitesimal deformation of $\@\gamma\@$ if and only if its
first jet--extension factors through $\@\Ag\@$, i.e.~if and only
if there exists a section $\@\hat{X}:\R\to\Ag\,$ satisfying
$\@j_1(X)=i_*\@\hat{X}\/$; conversely,~a~section
$\@\hat{X}:\R\to\Ag\@$ represents an admissible infinitesimal
deformation of $\@\GAMMA\@$ if and only if it projects into an
admissible infinitesimal deformation~of $\@\gamma\@$, i.e.~if
and only if $\,i_*\@\hat{X}= j_1({\pi_*\@\hat{X}})\@$.
\end{enumerate}
\end{proposition}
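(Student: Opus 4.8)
The plan is to dispatch part (i) by reading off the coordinate formulae already derived, and to split part (ii) into its ``only if'' directions (differentiate the admissibility condition in $\xi$) and its ``if'' directions (integrate an ordinary differential equation to manufacture a genuine admissible deformation). For part (i): the infinitesimal deformation of $\GAMMA_\xi$ is, by definition, the field $\hat X$ of \eqref{Eq_Deformaz infinitesima GAMMA}, which has no $\de/de t$--component and hence annihilates $dt$ along $\GAMMA$; thus it is a section $\hat X\colon\R\to\Ag$. Applying $\pi_*$ deletes the $\de/de{z^A}$--terms and leaves the vertical field $X^i(t)(\de/de{q^i})_\gamma$ along $\gamma$, a section $X\colon\R\to\Vg$. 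Conversely, every section of $\Vg$ (resp.\ of $\Ag$) is the infinitesimal deformation of the affine deformation $q^i=q^i(t)+\xi X^i(t)$ (resp.\ with $z^A=z^A(t)+\xi\Gamma^A(t)$ adjoined), so the correspondence is onto; part (i) is then a matter of a couple of lines.

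For the necessity in part (ii): if $X\colon\R\to\Vg$ is an admissible infinitesimal deformation, it is tangent at $\xi=0$ to an admissible $\gamma_\xi=\pi\cdot\GAMMA_\xi$ with $\GAMMA_\xi$ admissible; letting $\hat X$ be the infinitesimal deformation of $\GAMMA_\xi$, part (i) gives $\pi_*\hat X=X$. Differentiating the admissibility relation \eqref{Eq_Ammissibilit GAMMA_xi} with respect to $\xi$ at $\xi=0$ yields exactly the variational equation which, read through the jet--coordinates $t,v^i,\dot v^i$ of $j_1(\Vg)$ and the coordinates $t,v^i,w^A$ of $\Ag$ recalled before the statement, is precisely relation \eqref{Eq_1.12b}: that is, $i_*\hat X=j_1(X)$. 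Running the same computation starting from an admissible $\hat X\colon\R\to\Ag$ gives $i_*\hat X=j_1(\pi_*\hat X)$, because an admissible deformation of $\GAMMA$ projects onto the admissible deformation $\pi\cdot\GAMMA_\xi$ of $\gamma$, whose infinitesimal deformation is $\pi_*\hat X$.

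For the sufficiency I would isolate one claim that carries both converse implications: \emph{a section $\hat X=X^i(t)(\de/de{q^i})_\GAMMA+\Gamma^A(t)(\de/de{z^A})_\GAMMA$ of $\Ag$ satisfying the variational equation is the infinitesimal deformation of some admissible deformation of $\GAMMA$}. Granting it, the converse for $\hat X$ is immediate, the variational equation being the coordinate form of $i_*\hat X=j_1(\pi_*\hat X)$; and the converse for $X$ follows too, since, given $j_1(X)=i_*\hat X$ with $\hat X\colon\R\to\Ag$, applying $\pi_*$ and using $\pi_*\cdot i_*=\pi_*$ from \eqref{Diag_1.12a} gives $\pi_*\hat X=X$ and hence the variational equation for $\hat X$, so the claim produces an admissible $\GAMMA_\xi$ with infinitesimal deformation $\hat X$ and $\gamma_\xi=\pi\cdot\GAMMA_\xi$ then realises $X$. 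To prove the claim I would fix a point $t_0$ in the domain, set $\zeta^A(\xi,t):=z^A(t)+\xi\,\Gamma^A(t)$, and invoke the existence--uniqueness theorem with smooth dependence on parameters to solve, for $\xi$ near $0$,
\[
\de{\varphi^i}/de t=\psi^i\bigl(t,\varphi^1,\ldots,\varphi^n,\zeta^1,\ldots,\zeta^r\bigr),\qquad \varphi^i(\xi,t_0)=q^i(t_0)+\xi\,X^i(t_0).
\]
Then $\GAMMA_\xi\colon q^i=\varphi^i(\xi,t),\ z^A=\zeta^A(\xi,t)$ is admissible by \eqref{Eq_1.7}; uniqueness at $\xi=0$ forces $\varphi^i(0,t)=q^i(t)$, so $\GAMMA_0=\GAMMA$; and differentiating the Cauchy problem in $\xi$ at $0$ shows that $Y^i:=(\de\varphi^i/de\xi)_{\xi=0}$ solves the same linear variational system as $X^i$ with $Y^i(t_0)=X^i(t_0)$, whence $Y^i\equiv X^i$. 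So the infinitesimal deformation of $\GAMMA_\xi$ is exactly $\hat X$.

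The only step that is not bookkeeping with the identifications of $j_1(\Vg)$ and $\Ag$ recalled just above the statement is this last one: converting the pointwise algebraic relation \eqref{Eq_1.12b} into an honest one--parameter family of admissible sections, which rests on Picard--Lindel\"of together with differentiable dependence on parameters. For a non--compact $t$--domain one works on compact subintervals, or notes that small perturbations of a global solution remain global. I would also point out that no endpoint conditions intervene here, the notion of deformation of \S\ref{SSec_Infinitesimal deformations of sections} being free at the ends.
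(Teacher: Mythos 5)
Your proof is correct, but note that the paper itself offers no proof to compare against: Proposition \ref{Prop_Infinitesimal deformations} is explicitly announced as a summary of the discussion in \S\ref{SSec_Infinitesimal deformations of sections} (the paper states in its introduction that all results of Section 1 are given ``without any proofs nor comments'', deferring details to \cite{MBP}). Your part (i) and the two ``necessity'' directions of part (ii) are exactly the coordinate computations the paper already displays --- the variational equation obtained by differentiating \eqref{Eq_Ammissibilit GAMMA_xi} in $\xi$, read through the identifications of $j_1(\Vg)$ and $\Ag$ as the relation \eqref{Eq_1.12b} --- so there you are merely making explicit what the surrounding text leaves implicit. The genuine content you add is the sufficiency step: converting a section $\hat{X}$ of $\Ag$ satisfying the variational equation into an actual admissible finite deformation by freezing $\zeta^A(\xi,t)=z^A(t)+\xi\,\Gamma^A(t)$, solving the Cauchy problem for $\varphi^i$ with $\xi$--dependent initial data, and using uniqueness of the linearized system to identify $(\de{\varphi^i}/de{\xi})_{\xi=0}$ with $X^i$. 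That argument is sound and is the standard one (it is essentially what \cite{MBP} does); your caveat about global existence in $t$ is the right one to flag, since for sections defined on all of $\R$ the admissible interval in $\xi$ may shrink as the $t$--interval grows, whereas on the compact intervals $[t_0,t_1]$ actually used in the variational problem continuous dependence on parameters gives a uniform $\varepsilon$. No gap.
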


From a structural viewpoint, Proposition \ref{Prop_Infinitesimal
deformations}  points out the perfectly symmetric roles
respectively played by diagram \eqref{Diag_1.4} in the study of the
admissible \emph{evolutions\/} and by diagram \eqref{Diag_1.12a} in
the study of the infinitesimal \emph{deformations\/}, thus enforcing
the idea that the second context is essentially a ``\@linearized
counterpart\/'' of the former one.

%%%%%%%%%%%%%%%%%%%%%%%%%%%%%%%%%%%%%%%%%%%%%%%%%%%%%%%%%%%%%%%%%%%%%%%%%%%%%%

\subsection{The gauge setup}
\subsubsection{The Lagrangian bundles}\label{SSSec_The Lagrangian bundles}
Given any system subject to (smooth) positional constraints, we
introduce a double fibration $\@P \rarw{\pi} \V \rarw{t} \R\@$,
where:
\begin{itemize}
  \item[\textit{i})] $\V \rarw{t} \R\@$ is the configuration space--time of the
  system;\vspace{4pt}
  \item[\textit{ii})] $P \rarw{\pi} \V\@$ is a principal fibre bundle with
  structural group $\@(\?\R\?,\? +\?)\,$.
\end{itemize}

As a consequence of the stated definition, each fibre $\@P_x :=
\pi^{-1}(x),\, x \in \V\@$ is an affine\linebreak 1--space. The total space
$\@P\@$ is therefore a trivial bundle, diffeomorphic in a
non--canonical way to the Cartesian product $\@\V \times \R\@$,
called the bundle of \emph{affine scalars} over $\@\V\@$.

The action of $\@(\?\R\?,\? +\?)\@$ on $\@P\@$ results into a
$1$--parameter group of diffeomorphisms $\@\psi_\xi\colon P \to
P\@$, conventionally expressed through the additive notation
\begin{equation}\label{Eq_1.14}
  \psi_\xi\/(\nu)\, :=\, \nu + \xi \qquad \forall\, \xi \in \R,\ \nu \in P
\end{equation}
Every map $\@u \colon P \to \R\@$ satisfying the requirement
\[
u\/(\nu + \xi)\, =\, u(\nu) + \xi
\]
is called a (\@global\@) trivialization of $\@P\@$. If $\@u\@$,
$u'\@$ is any pair of trivializations, the difference $\@u - u'\@$
is then (\@the pull--back of\@) a function over $\@\V\@$. Moreover,
every section $\@\varsigma \colon \V \to P\@$ determines a
trivialization $\@u_\varsigma \in \F(P)\@$ and conversely, being the
relation between $\@\varsigma\@$ and $\@u_\varsigma\?$ 
expressed by the condition
\begin{equation}\label{Eq_1.15}
  \nu\, =\, \varsigma(\pi(\nu))\, +\, u_\varsigma(\nu)\qquad \forall\, \nu \in P
\end{equation}
Therefore, once a (\@global\@) trivialization $\@u \colon P \to
\R\@$ has been chosen, every section $\@\varsigma\colon  \V \to
P\@$\vspace{.8pt} is completely characterized by the knowledge of
the function $\@f = \varsigma^*(u) \in \F(\V)\@$.

The assignment of $\@u\@$ allows to lift every local coordinate
system $\@t, q^1, \ldots, q^n\@$ over $\@\V\@$ to a corresponding
fibred one $\@t, q^1, \ldots, q^n, u\@$ over $\@P\@$, being the most
general transformation between fibred coordinates of the form
\[
\bar{t} = t + c\ ,\qquad \bar{q}\@^i = \bar{q}\@^i(t, q^1, \ldots, q^n)\ ,
\qquad \bar{u} = u + f(t, q^1. \ldots, q^n)
\]
The action of the group $\@(\?\R\?,\? +\?)\@$ on the manifold
$\@P\@$ is expressed in fibred coordinates by the relations
\[
t(\nu + \xi) = t(\nu)\ ,\qquad q^i(\nu + \xi) = q^i(\nu)\ ,
\qquad u(\nu + \xi) = u(\nu) + \xi
\]
As a result, the generator of the group action \eqref{Eq_1.14},
usually referred to as the \emph{fundamental vector field} of
$\?P\?$, is canonically identified with the field
$\@\tfrac{\partial}{\partial u}\@$.

\vspace{1pt}The (\@pull--back of the\@) absolute time function
determines a fibration $\@P \rarw{t} \R\@$ whose associated first
jet--space is indicated by $\@j_1\/(P,\? \R) \rarw{\pi} P\@$
and is referred to local jet--coordinates $\@t, q^i,
u, \Q^i, \dot{u}\@$ subject to transformation laws
\begin{subequations}\label{Eq_1.16}
\begin{equation}
\bar{t} = t + c\ ,\qquad \bar{q}\@^i = \bar{q}\@^i(t, q^1, \ldots, q^n)\ ,
\qquad \bar{u} = u + f(t, q^1. \ldots, q^n)
\end{equation}
\begin{equation}
  \bar{\Q}\@^i = \frac{\partial \bar{q}\@^i}{\partial q^k}\, \Q^k + \frac{\partial \bar{q}\@^i}{\partial t}\ , \qquad
  \bar{\dot{u}} = \dot{u} + \frac{\partial f}{\partial q^k}\, \Q^k + \frac{\partial f}{\partial t} := \dot{u} + \dot{f}
\end{equation}
\end{subequations}

The manifold $\@j_1\/(P,\? \R)\@$ is naturally embedded into the
tangent space $\@T(P)\@$ through the identification
\[
j_1\/(P,\? \R)\ =\ \left\{\@ z \in T(P)\ \vline\ \left<\?z\?,\? dt\?\right>\, =\, 1\@ \right\}
\]
expressed in local coordinate as
\begin{equation}\label{Eq_1.17}
z \in j_1(P, \R)\quad \Longleftrightarrow\quad z\ =\ \left[\@\frac{\partial}{\partial t} + \Q^i(z) \frac{\partial}{\partial q^i} +
\U^i(z) \frac{\partial}{\partial u}\@\right]_{\pi(z)}
\end{equation}

In addition to the jet attributes, the space $\@j_1\/(P,\? \R)\@$
inherits from $\?P\?$ two distinguished actions of the group
$\@(\?\R,\? +\?)\@$, related in a straightforward way to the
identification \eqref{Eq_1.17}.

\vspace{2pt}
The former is simply the push--forward of the action
\eqref{Eq_1.14}, restricted to the submanifold $\@j_1\/(P,\? \R)
\subset T(P)\@$. In jet--coordinates, a comparison with equation
\eqref{Eq_1.17} provides the local representation
\begin{subequations}
\begin{equation}\label{Eq_1.18a}
  \left(\psi_{\xi}\right)_*\/(z)\ =\ \left[\@\frac{\partial}{\partial t}\, +\, \Q^i(z)\@ \frac{\partial}{\partial q^i}\, +\,
\U^i(z)\@ \frac{\partial}{\partial u}\@\right]_{\pi(z) + \xi}
\end{equation}
expressed symbolically as
\begin{equation}\label{Eq_1.18b}
  \left(\psi_{\xi}\right)_*\ :\ (\?t, q^i, u, \Q^i, \U^i\?)\ \longrightarrow\ (\?t, q^i, u + \xi, \Q^i, \U^i\?)
\end{equation}
\end{subequations}
The quotient of $\@j_1(P,\? \R)\@$ by this action is a $(2n +
2)$--dimensional manifold, denoted by $\@\L(\V)\@$. 
The quotient map makes $\@j_1\/(P,\? \R)\@$
into a principal fibre bundle over $\@\L(\V)\@$, with structural
group $\@(\?\R,\? +\?)\@$. Furthermore, by equation \eqref{Eq_1.18b}, 
$\@\L(\V)\@$ is an affine fibre bundle over $\@\V\@$ with
local coordinates $\@t, q^i, \Q^i, \U\@$.

\vspace{2pt}
The latter action of $\@(\?\R ,\? +\?)\@$ on $\@j_1\/(P,\? \R)\@$
follows from the invariant character of the field
$\@\tfrac{\partial}{\partial u}\@$ and is expressed in local
coordinates by the addition
\begin{subequations}
  \begin{equation}\label{Eq_1.19a}
    \phi_\xi\/(z)\ :=\ z + \xi\, \left(\frac{\partial}{\partial u}\right)_{\!\pi(z)} =\
    \left[\@\frac{\partial}{\partial t} + \Q^i(z)\@ \frac{\partial}{\partial q^i} +
    \left(\? \U^i(z) + \xi\? \right)\@ \frac{\partial}{\partial u}\@\right]_{\pi(z)}
  \end{equation}
  summarized into the symbolic relation
  \begin{equation}\label{Eq_1.19b}
  \phi_{\xi}\ :\ (\?t, q^i, u, \Q^i, \U^i\?)\ \longrightarrow\ (\?t, q^i, u, \Q^i, \U^i + \xi\?)
\end{equation}
\end{subequations}
The quotient of $\@j_1\/(P,\?\R)\@$ by this action is once again a
$(2n + 2)$--dimensional manifold, denoted by
$\?\L^c(\V)\@$. As before, equation \eqref{Eq_1.19b} points out the nature
of $\? \L^c(\V)\@$ as a fibre bundle over $\?P\?$ (\@as well as on
$\@\V\@$), with coordinates $\@t, q^i, u, \Q^i\@$. The quotient map
makes $\@j_1\/(P,\? \R) \to \L^c(\V)\@$ into a principal fibre
bundle, with structural group $\@(\?\R,\? +\?)\@$ and group action
\eqref{Eq_1.19a}.

\vspace{2pt}
Eventually, the
group actions \eqref{Eq_1.18a}, \eqref{Eq_1.19a} do \emph{commute}.
Therefore, each of them may be used to induce a group action on the
quotient space generated by the other one. 
This makes both $\?\L(\V)\@$ and $\?\L^c(\V)\@$ into
principal fibre bundles over a common ``double quotient'' space,
canonically diffeomorphic to the
velocity space $\?j_1(\V)\@$.

\smallskip
The situation is summarized into the commutative diagram

\begin{equation*}
\begin{CD}
  j_1(P, \R) @>>>  \L^c(\V) \\ @VVV @VVV \\ \L(\V) @>>> j_1(\V)
\end{CD}
\end{equation*}\\[2pt]
\noindent in which all arrows denote principal fibrations, with
structural groups isomorphic to $\@(\?\R,\? +\?)\@$ and group
actions obtained in a straightforward way from equations
\eqref{Eq_1.18b}, \eqref{Eq_1.19b}. The principal fibre bundles
$\L(\V) \to j_1(\V)$ and $\L^c(\V) \to j_1(\V)$ are respectively
called the \emph{Lagrangian} and the \emph{co--Lagrangian bundle}
over $\?j_1(\V)\@$.

\vspace{5pt}
The advantage of this framework is most appreciated by giving up the
traditional approach, based on the interpretation of the Lagrangian
function $\Lagr(t, q^i, \Q^i)$ as the representation of a\linebreak
(\@gauge--dependent\@) \emph{scalar field} over $\?j_1(\V)$ and
introducing instead the concept of \emph{Lagrangian section}, meant
as a section 
\[\ell\ \colon\ j_1(\V) \to \L(\V)\] 
of the Lagrangian bundle.

For each choice of the trivialization $\@u\@$ of $\?P\?$, the
description of $\?\ell\?$ takes the local form
\begin{equation}\label{Eq_1.20}
  \U\, =\, \Lagr(t, q^i, \Q^i)
\end{equation}
and so it does still rely on the assignment of a function
$\@\Lagr(t, q^i, \Q^i)\@$ over $\?j_1(\V)$. However, as soon as the
trivialization is changed into $\@\bar{u} = u + f\@$, the
representation \eqref{Eq_1.20} undergoes the transformation law
\begin{equation}\label{Eq_1.21}
\bar{\U}\, =\, \U + \dot{f}\, =\, \Lagr(t, q^i, \Q^i) + \dot{f} := \Lagr'(t,
q^i, \Q^i)
\end{equation}
involving a different, gauge--equivalent, Lagrangian.

%%%%%%%%%%%%%%%%%%%%%%%%%%%%%%%%%%%%%%%%%%%%%%%%%%%%%%%%%%%%%%%%%%%%%%%%%%%%%%

\subsubsection{The non-holonomic Lagrangian bundles}
In the presence of non--holonomic constraints, the construction of the
Lagrangian bundles may be easily adapted to the submanifold $\@\A\@$,
through a straightforward pull--back process.

\vspace{2pt}
\noindent The situation is conveniently illustrated by means of a commutative
diagram
\\[1em]
\begin{equation}\label{Diag_1.22}
\begin{picture}(80,20)
% frecce orizzontali
\put (7,-25)    {\vector(1,0){42}}
\put (32,-10)   {\line(1,0){20}}
\put (54,-10)   {\vector(1,0){15}}
\put (9,5)      {\vector(1,0){36}}
\put (33,20)    {\vector(1,0){36}}
%
% frecce verticali
\put (0,2)      {\vector(0,-1){24}}
\put (53,2)     {\vector(0,-1){24}}
\put (23,17)    {\line(0,-1){11}}
\put (23,4)     {\vector(0,-1){11}}
\put (78,17)    {\vector(0,-1){24}}
%
% frecce oblique
\put (3,-22)    {\vector(2,1){18}}
\put (56,-22)   {\vector(2,1){18}}
\put (3,8)      {\vector(2,1){18}}
\put (56,8)     {\vector(2,1){18}}
%
% nomi degli spazi
\put (0,-25)    {\makebox(0,0){$\L(\A)$}}
\put (53,-25)   {\makebox(0,0){$\A$}}
\put (53,5)     {\makebox(0,0){$\L^c(\A)$}}
\put (0,5)      {\makebox(0,0){$j_1^{\A}(P, \R)$}}
\put (78,-10)   {\makebox(0,0){$j_1(\V)$}}
\put (78,20)    {\makebox(0,0){$\L^c(\V)$}}
\put (23,20)    {\makebox(0,0){$j_1(P, \R)$}}
\put (23,-10)   {\makebox(0,0){$\L(\V)$}}
\end{picture}
\end{equation}
\\[7em]
where:
\begin{itemize}
  \item $\L(\A)\@$ and $\?\L^c(\A)\@$ are respectively the pull--back of $\?\L(\V)\@$
and $\?\L^c(\V)\@$ on the submanifold $\?\A \to
j_1(\V)\@$;\vspace{1pt}
\item the space $j_1^{\A}\/(P,\? \R)\@$ may be alternatively seen as the pull--back
of the jet--bundle\linebreak $j_1\/(P,\? \R) \to \L(\V)\@$ on the
submanifold $\L(\A) \to \L(\V)$ or as the pull--back
of\linebreak $j_1\/(P,\? \R) \to \L^c(\V)$ on $\?\L^c(\A) \to
\L^c(\V)\@$.
\end{itemize}

\smallskip
\noindent The geometrical properties of the above--defined pull--back bundles
are straightforwardly inherited from their respective holonomic
counterparts.
In particular:
\begin{itemize}
  \item Every choice of a trivialization $\@u\@$ of $\?P\?$ allows to lift
  any coordinate system of $\A\@$ to coordinates $\@t, q^i, z^A,
  u\@$ on $\L^c(\A)\@$, $\@t, q^i, z^A, \U\@$ on $\L(\A)\@$ and
  $\@t, q^i, u, z^A, \U\@$ on $\?j_1^{\A}\/(P,\? \R)\@$. The
  resulting coordinate transformations are obtained by
  completing equations \eqref{Eq_1.5} with (\@the significant
  part of\@) the system
  \begin{equation}
    \bar{u} = u + f(t, q^1,\ldots, q^n)\ ,\quad \bar{\dot{u}} = \dot{u} + \frac{\partial f}{\partial t} + \frac{\partial f}{\partial q^k}\, \psi^k(t, q^i, z^A)
    := \dot{u} + \dot{f}%_{\, \vline\, \A}
  \end{equation}
  \item The embeddings $\?\L(\A) \to
  \L(\V)\@$, $\@\L^c(\A) \to \L^c(\V)\,$ as well as
  $\@j_1^{\A}\/(P,\? \R) \to j_1(P, \R)$ are all locally
  described by equation \eqref{Eq_1.6a}\?.
  \vspace{1pt}
  \item Both actions \eqref{Eq_1.18a}, \eqref{Eq_1.19a} of the group $\@(\?\R,\? +\?)\@$ on $j_1\/(P,\? \R)\@$
  preserve the submanifold $\?j_1^\A\/(P,\? R)\@$ thereby
  inducing two corresponding actions $\?\left(\psi_\xi\right)_*
  \@$ and $\?\phi_\xi\@$ on $j_1^\A\/(P,\? R)\@$, expressed in
  coordinate as\vspace{1pt}
  \begin{subequations}
  \begin{align}
\left(\psi_\xi\right)_*\ &\colon\ \left(\?t, q^i, u, z^A,
\U\?\right)\quad \longrightarrow \quad \left(\?t, q^i, u + \xi, z^A,
\U\?\right)\label{Eq_1.24a}\\[4pt] \phi_\xi\ &\colon\ \left(\?t, q^i, u, z^A, \U\?\right)\quad
\longrightarrow \quad \left(\?t, q^i, u, z^A, \U + \xi\?\right)\label{Eq_1.24b}
  \end{align}
  \end{subequations}\vspace{1pt}
Acting in the same way as before, it is easily seen that the
manifold $\?j_1^\A\/(P,\? \R)\@$ is a principal fibre bundle
over $\L(\A)$ under the action $\left(\psi_\xi\right)_*\@$, as
well as a principal fibre bundle over $\L^c(\A)$ under the
action $\?\phi_\xi\@$. Moreover, both $\L(\A)$ and $\L^c(\A)$
are principal fibre bundles over $\A$ under the (\@induced\@)
actions $\?\left(\psi_\xi\right)_*\?$ and $\@\phi_\xi\@$
respectively. Accordingly, all arrows in the front and rear
faces of diagram \eqref{Diag_1.22} express principal fibrations,
while those in the left and right--hand faces are principal
bundle homomorphisms.
\end{itemize}

\noindent Preserving the terminology, the principal  bundles $\L(\A)
\to \A$ and $\L^c(\A) \to \A$ are respectively called
the \emph{non--holonomic Lagrangian bundle} and the \emph{non--holonomic
co--Lagrangian bundle} over $\A\@$, while any section $\ell \colon \A \to
\L(\A)$ is referred to as a \emph{non--holonomic Lagrangian section}.

\vspace{1pt}
Once a trivialization $\@u\@$ of $\?P\?$ has been fixed, the description of
 $\@\ell\@$ takes the local form
\begin{equation}\label{Eq_1.25}
  \U\ =\ \Lagr(t,q^i, z^A)
\end{equation}
which undergoes the transformation law
\begin{equation}\label{Eq_1.26}
  \bar{\U}\, =\, \U + \dot{f}%_{\, \vline\, \A}
  \, =\, \Lagr(t, q^i, z^A) + \frac{\partial f}{\partial t} +
  \frac{\partial f}{\partial q^i}\, \psi^i\, :=\, \Lagr'(t, q^i, z^A)
\end{equation}
under an arbitrary change $\@u \to u +f\/(t, q^1, \ldots, q^n)\@$. 

%%%%%%%%%%%%%%%%%%%%%%%%%%%%%%%%%%%%%%%%%%%%%%%%%%%%%%%%%%%%%%%%%%%%%%%%%%%%%%

\subsubsection{The Hamiltonian bundles}
Parallelling the discussion in \S\@\ref{SSSec_The Lagrangian
bundles}, we shall now briefly go over the construction of the
\emph{Hamiltonian bundles} on $\V\@$. To this end, we focus on the
fibration $\@P\to\V\/$, and denote by $\@\pi \colon j_1(P, \V) \to
P\,$ the associated first jet--space.

\noindent Every fibred coordinate system $\@t,q^i,u\/$ on $\?P\,$
induces local coordinates $\@t, q^i, u, p_0, p_i\@$ on $\?j_1\/(P,\?
\V)\@$, with transformation group
\begin{subequations}\label{Eq_1.27}
\begin{alignat}{2}
&\bar{t}=t+c\, , \qquad \bar{q}\@^i = \bar{q}\@^i(t, q^1, \ldots,
q^n)\, ,&\qquad &\bar{u} = u + f(t, q^1, \ldots,
q^n)\\[3pt]
&\bar{p}_0 = p_0 + \frac{\partial f}{\partial t} + \left(p_k +
\frac{\partial f} {\partial q^k}\right)\ \frac{\partial
q^k}{\partial t}\;\,,& \qquad &\bar{p}_i = \left(p_k +
\frac{\partial f}{\partial q^k} \right)\ \frac{\partial
q^k}{\partial \bar{q}^i}
\end{alignat}
\end{subequations}

The manifold $\?j_1\/(P,\? \V)\?$ is naturally imbedded into the
cotangent space $\@T^*(P)$ through the identification
\[
j_1\/(P,\? \V)\ =\ \left\{\@ \eta \in T^*(P)\ \vline\ \left<\?\eta\? ,\? \frac{\partial}{\partial u}\?\right>\, =\, 1\@\right\}
\]
expressed in local coordinate as
\begin{equation}\label{Eq_1.28}
\eta \in j_1(P, \V)\quad \Longleftrightarrow\quad \eta\, =\, \left[\@du\@ -\@ p_0\/(\eta)\@ dt\@ -\@ p_i\/(\eta)\@ dq^i\@\right]_{\pi(\eta)}
\end{equation}

\noindent Furthermore, %
the jet--bundle structure endows
$\?j_1\/(P,\? \V)\?$ with a contact bundle, locally generated by the
\emph{Liouville\/ $1$--form}
\begin{equation}\label{Eq_1.29}
  \tilde{\Theta}\ =\ d u\, -\, p_0\,d t\, -\, p_i\, d q^i
\end{equation}
which is an intrinsically defined 
object, as ensured by equations (\ref{Eq_1.27}\hspace{1pt}a,b).

\medskip Exactly as in the Lagrangian case, one can easily establish
two distinguished actions of the group $\@(\?\R,\?+\?)\/$ on
$\?j_1\/(P, \V)\@$, expressed locally as
\begin{subequations}
\begin{align}\label{Eq_1.30a}
\left(\psi_\xi\right)_*\/(\eta)\, &:=\, {\left(\psi_{-\xi}\right)_*}^*(\eta)\
=\ \left[\@du - p_0\/(\eta)\@ dt - p_i\/(\eta)\@ dq^i \@ \right]_{\pi(\eta) + \xi}\\[0.4em]
\label{Eq_1.30b} \phi_\xi\/(\eta)\, &:=\, \eta - \xi\,
(dt)_{\pi(\eta)}\, = \left[\@du - \left(\?p_0\/(\eta) +
\xi\?\right)\@ dt - p_i\/(\eta)\@ dq^i \@ \right]_{\pi(\eta)}
\end{align}
\end{subequations}

\medskip
\noindent In this connection, we point out that:
\begin{itemize}
\item The direct product of the actions (\eref{Eq_1.30a}, b) makes
$j_1\/(P,\? \V)\@$ into a principal fibre bundle over a
$(2n+1)$--dimensional base space $\Pi(\V)\@$, with coordinates
$\@t,q^i,p_i\@$, called the \emph{phase space}.\vspace{1pt}
\item In view of equations \eqref{Eq_1.2}, (\ref{Eq_1.27}\hspace{1pt}a,b), the phase space
$\Pi(\V)$ is readily seen as an affine bundle over $\V\@$, modelled on
$V^*(\V)\@$.\vspace{1pt}
\item The quotient of $j_1\/(P,\? \V)\@$ by the action \eqref{Eq_1.30a},
denoted by $\@\H(\V)\@$, is an affine bundle over $\V\@$,
modelled on the cotangent space $\@T^*(\V)\@$ and called the
\emph{Hamiltonian bundle}.\vspace{1pt}
\item Any trivialization $u \colon P \to \R$ allows to lift every local coordinate system
$\@t, q^1, \ldots, q^n\@$ on $\V$ to a corresponding one $\@t,
q^1, \ldots, q^n, p_0, p_1, \ldots, p_n\@$ on $\H(\V)\@$,
subject to the transformation law
    \begin{equation}
      \bar{p}_0 = p_0 + \frac{\partial f}{\partial t}\ ,\qquad \bar{p}_i = p_i + \frac{\partial f} {\partial q^i}
    \end{equation}
    further to a change of $\@u\@$ into $\@\bar{u} = u + f(t,
    q^1, \ldots, q^n)\@$.\vspace{1pt}
\item The quotient map makes $\@j_1\/(P,\? \V)\@$ into a principal fibre bundle
over $\H(\V)\@$, with structural group
$\@(\?\R,\?+\?)\@$ and fundamental vector
$\@\tfrac{\partial}{\partial u}\@$.\vspace{2pt}
\item The canonical Liouville
$1$--form \eqref{Eq_1.29} endows $j_1\/(P,\? \V) \to \H(\V)$
with a distinguished connection, called the \emph{Liouville
connection}.\vspace{2pt} 
\item The action \eqref{Eq_1.30b}
``passes to the quotient'', thereby making $\@\H(\V)\@$ into a
principal fibre bundle over the phase space
$\Pi(\V)$\@.\vspace{2pt}
\item The quotient of $j_1\/(P,\? \V)$ by
the action \eqref{Eq_1.30b}, denoted by $\@\H^c(\V)\@$, is a\linebreak
$(2n+2)$--dimensional manifold, with coordinates $\@t, q^i, u,
p_i\@$, called thek \emph{co--Hamiltonian bundle}. The
quotient map makes $j_1\/(P,\? \V)$ into a principal fibre
bundle over $\@\H^c(\V)\@$.\vspace{2pt} 
\item The action
\eqref{Eq_1.30a}, suitably transferred to $\H^c(\V)\@$, makes
the latter into a principal fibre bundle over $\Pi(\V)$\,.
\end{itemize}

\vspace{4pt}
\noindent The previous discussion is summarized into the commutative diagram
\begin{equation}\label{Diag_1.32}
\begin{CD}
   j_1\/(P,\? \V)        @>>>   \H^c(\V)  \\
   @VVV                 @VVV \\
   \H(\V)           @>>>    \Pi(\V)
\end{CD}
\end{equation}
all arrows denoting principal fibrations with structural group
isomorphic to $\R\@$.

\vspace{1pt} \noindent As implicit in the notation, the manifold 
$j_1\/(P,\@ \V)$ is indeed
identical to the pull--back of $\@\H^c(\V)\@$ over $\@\H(\V)\@$, as
well as the pull--back of $\@\H(\V)\@$ over $\@\H^c(\V)\@$.

%%%%%%%%%%%%%%%%%%%%%%%%%%%%%%%%%%%%%%%%%%%%%%%%%%%%%%%%%%%%%%%%%%%%%%%%%%%%%%

\subsection{Further developments}\label{SSec_Further developments}
\textbf{(\/i\/)} \,The identifications \eqref{Eq_1.17}, \eqref{Eq_1.28} provide a
natural pairing between the fibres of the first jet--spaces
$j_1\/(P,\? \R) \rarw{\pi} P\@$ and $j_1\/(P,\? \V) \rarw{\pi} P\@$,
locally expressed as
\begin{equation}\label{Eq_1.33}
\left<\?z\?,\?\eta\?\right>\, =\, \left<\?\left[\@\frac{\partial}{\partial t} + \Q^i(z)\/\frac{\partial}{\partial q^i} +
\U(z)\/\frac{\partial}{\partial u}\@\right]_{\pi(z)}\! ,\? \left[\@du - p_0\/(\eta)\@dt - p_i\/(\eta)\@dq^i \@ \right]_{\pi(\eta)}\?
\right>\raisetag{-2pt}
\end{equation}
for all $z \in j_1\/(P,\? \R)\@,\, \eta \in j_1\/(P,\? \V)$
satisfying $\pi(z) = \pi(\eta)\@$.

In view of equations \eqref{Eq_1.18a}, \eqref{Eq_1.30a}, the
correspondence \eqref{Eq_1.33} satisfies the invariance property
\begin{equation}\label{Eq_1.34}
 \Big<\? (\psi_\xi)_*(z)\, ,\, (\psi_\xi)_*(\eta)\?\Big>\
  =\ \big<\? z\? ,\? \eta\? \big>
\end{equation}
thereby inducing an analogous pairing operation between the fibres
of the bundles $\L(\V) \to \V\@$ and $\@\H(\V) \to \V\@$, or ---
just the same
--- giving rise to a bi--affine map of the fibred product $\?\L(\V) \times_{\V}
\H(\V)\@$ onto $\R\@$, expressed in coordinates as
\begin{equation}\label{Eq_1.35}
  \zeta, \mu\quad \longrightarrow \quad F(\zeta, \mu)\, :=\, \U\/(\zeta)\@ -\@ p_0\/(\mu)\@ -\@p_i\/(\mu)\, \Q^i\/(\zeta)
\end{equation}

\medskip
\noindent \textbf{(\/ii\/)} \,Let $\@\mathcal{S}\@$ denote the submanifold of $\@\L(\V)
\times_{\V} \H(\V)\@$ described by the equation
\begin{equation}\label{Eq_1.36}
  \mathcal{S}\, =\, \left\{\@ (\?\zeta\?,\?\mu\?) \in \L(\V) \times_{\V} \H(\V)\ \
  \vline\ \ F\/(\zeta, \mu)\, =\, 0\@ \right\}
\end{equation}

\noindent A straightforward argument, based on equation
\eqref{Eq_1.35}, shows that the submanifold $\@\mathcal{S}\@$ is at
the same time a fibre bundle over $\L(\V)\@$ as well as over
$\@\H(\V)\@$.
The former case is made explicit by referring
$\@\mathcal{S}\@$ to local coordinates $\@t, q^i, \Q^{\?i}, \U,
p_i\@$, the $\@p_i$'\?s been regarded as fibre coordinates. The
latter circumstance is instead accounted for by referring
$\@\mathcal{S}\@$ to coordinates $\@t, q^i, \Q^{\?i}, p_0, p_i\@$,
related to the previous ones by the transformation
\[
\U\ =\ p_0\, +\, p_i\, \Q^{\?i}
\]
and with the $\@\Q^{\?i}$'\?s now playing the role of fibre coordinates.

\smallskip The restriction to the submanifold $\?\SS\?$ of the
action
\[
\phi_\xi\/(\zeta\@, \mu)\, :=\, \big(\?\phi_\xi(\zeta)\@, \phi_\xi(\mu)\?\big) \qquad \quad
\forall\, (\?\zeta\?,\?\mu\?) \in \L(\V) \times_{\V} \H(\V)
\]
makes the latter into a principal fibre bundle over a $(3\?n +
1)$--dimensional base space $\@\CC\/(j_1(\V))\?$, with coordinates
$\@t,q^i,\Q^i,p_i\@$.

Depending on the choice made for the local coordinates over
$\@\mathcal{S}\@$, the resulting group action may be expressed
symbolically either as
\begin{subequations}
\begin{equation}
  \phi_{\xi}\ :\ (\?t, q^i, \Q^i, \U, p_i\?)\ \longrightarrow\ (\?t, q^i, \Q^i, \U + \xi, p_i\?)
\end{equation}
or
\begin{equation}
  \phi_{\xi}\ :\ (\?t, q^i, \Q^i, p_0, p_i\?)\ \longrightarrow\ (\?t, q^i, \Q^i, p_0 + \xi, p_i\?)
\end{equation}
\end{subequations}

\vspace{0.3cm} \noindent The situation is summarized into the
following diagram\vspace{0.3cm}
\begin{equation}\label{Diag_1.38}
\xymatrix{ & \mathcal{S} \ar[rr] \ar[dd] \ar[dl] & &
\H\/(\V) \ar[dd] 
\\
\L\/(\V) \ar[dd] 
& & & 
\\
& \mathscr{C}\/(j_1\/(\V)) \ar[dl] \ar[rr]  & &
\Pi\/(\V) \ar[dl]
\\
 j_1\/(\V) \ar[rr] & &
\V &
 }
\end{equation}

\vspace{0.3cm}\noindent In view of equations \eqref{Eq_1.2} and
(\ref{Eq_1.27}\@b), the manifold $\?\CC\/(j_1(\V))\?$ is --- by
construction --- an affine bundle over $\@j_1(\V)\?$, modelled on the
contact space $\?\C\/(j_1(\V))\@$.

\medskip \noindent \textbf{(\/iii\/)} \, The canonical contact
$1$--form \eqref{Eq_1.29} of $\?j_1\/(P,\? \V)\@$ can be
pulled--back onto the fibred product $j_1(P, \R) \times_P j_1(P,
\V)\@$, thus endowing the principal fibre bundle $j_1(P, \R) \times_P j_1(P, \V)
\to \L(\V) \times_{\V} \H(\V)$  with a canonical connection.

For every choice of the trivialization $\@u\@$ of $P\to \V\@$, the
difference $\@du - \tilde{\Theta}\@$ is (\@the pull--back of\@)\linebreak a
$1$--form $\@\tilde{\Theta}_u\@$ on $\L(\V) \times_{\V} \H(\V)\@$,
locally expressed as
\begin{equation}\label{Eq_1.42}
  \tilde{\Theta}_u\, =\, p_0\, dt + p_i\, dq^i
\end{equation}
and subject to the transformation law
\begin{equation}\label{Eq_1.43}
  \tilde{\Theta}_{\bar{u}} = \left(p_0 + \frac{\partial f}{\partial t}\right)\, dt + \left(p_i + \frac{\partial f}{\partial q^i}\right)\, dq^i =
  \tilde{\Theta}_u + df
\end{equation}
under an arbitrary change $\@u \to \bar{u} = u + f(t, q^1, \ldots, q^n)\@$.

\vspace{2pt} The form $\@\tilde{\Theta}_u\@$ can now be once again
pulled--back onto $\@\mathcal{S}\@$. In this last step, depending on
the choice of the coordinates over $\@\mathcal{S}\@$, the resulting
$1$--form is locally expressed as
\begin{equation}\label{Eq_1.44}
  \Theta_u\, =\, p_0\, dt + p_i\, dq^i\, \equiv\, \U\, dt + p_i\, \left(\?dq^i - \Q^i\, dt \?\right)
\end{equation}
Hence, the submanifold $\?\mathcal{S}\?$ is provided with a
distinguished $1$--form $\@\Theta_u\@$, defined up to the
choice of the trivialization of $\?P\@$.

\medskip \noindent \textbf{(\/iv\/)} \, In the presence of non--holonomic constraints,
the left--hand face of %diagram
\eqref{Diag_1.38}\@,
\begin{equation}\label{Diag_1.39}
  \begin{CD}
    \mathcal{S} @>{\pi_{\mathcal{S}}}>>\L(\V)\\
    @VVV @VVV\\
    \CC(j_1(\V)) @>>> j_1(\V)
  \end{CD}
\end{equation}
may be easily pulled--back through the imbedding
$\@\A\rarw{i}{j_1\/(\V)}\@$, giving rise to the analogous
diagram\vspace{2pt}
  \begin{equation}\label{Diag_1.40}
  \begin{CD}
    \mathcal{S}^\A @>{\pi_{\mathcal{S}}}>> \L(\A) \\
    @VVV @VVV\\
    \CC(\A) @>>> \A
  \end{CD}
\end{equation}\vspace{2pt}

\noindent By construction, the manifold $\@\mathcal{S}^\A\@$ is then
a principal fibre bundle over the base space $\?\CC(\A)\@$ under the
(\@induced\@) action
\begin{equation}\label{Eq_1.41}
  \phi_{\xi}\ :\ (t, q^i, z^A, \U, p_i)\ \longrightarrow\ (t, q^i, z^A, \U + \xi, p_i)
\end{equation}
while, in the same manner as before, $\@\CC\/(\A)$ is an affine
bundle over $\@\A\?$ modelled on the non--holonomic contact bundle
$\@\C\/(\A)\?$.

By means of the pull--back procedure, the canonical form
\eqref{Eq_1.44} determines a distinguished\linebreak $1$--form on
$\?\mathcal{S}^\A$, locally expressed by\@\footnote{The same symbol
$\Theta_u$ will stand for both the form \eqref{Eq_1.44} and its pull--back on $\A\@$.}
\begin{equation}\label{Eq_2.4}
  \Theta_u\, =\, p_0\, dt + p_i\, dq^i\, \equiv\, \U\, dt + p_i\, \left(dq^i - \psi^i\, dt\right)
\end{equation}

\medskip \noindent \textbf{(\/v\/)} \, Every non--holonomic Lagrangian 
section $\?\ell \colon \A \to
\L(\A)\@$ determines a trivialization\linebreak $\varphi_\ell \colon \L(\A)
\to \R\?$ of the bundle $\?\L(\A) \to \A\@$. Let $\hat{\varphi}_\ell
:= \pi_{\mathcal{S}}^*(\varphi_\ell)$ denote the pull--back of
$\?\varphi_\ell\?$ to $\@\mathcal{S}^\A\@$, locally expressed as
\begin{equation}
  \hat{\varphi}_\ell\/ (t, q^i, z^A, \U, p_i)\, =\, \varphi_\ell(t, q^i, z^A, \U)\, =\, \U\@ -\@ \Lagr(t, q^i, z^A)
\end{equation}
From this, taking equation \eqref{Eq_1.41} into account, it is an
easy matter to check that the function $\@\hat{\varphi}_\ell\@$ is a
trivialization of the bundle $\?\mathcal{S}^\A \to \CC(\A)\?$ and
that, as such, it determines a section $\@\tilde{\ell} \colon
\CC(\A) \to \mathcal{S}^\A\@$, locally described by the equation
\begin{equation}\label{Eq_1.49}
  \U\, =\, \Lagr(t, q^i, z^A)
\end{equation}

In brief, every section $\?\ell \colon \A \to \L(\A)\?$ may be
lifted to a section $\tilde{\ell} \colon \CC(\A) \to
\mathcal{S}^\A\@$. The local representations of both sections are
formally identical and they obey the transformation law
\eqref{Eq_1.26} for an arbitrary change of the trivialization $\@u
\colon P \to \R\@$.

\vspace{2pt}
The section $\@\tilde{\ell} \colon \CC(\A) \to \mathcal{S}^\A\@$ may
now be used to pull--back the form \eqref{Eq_2.4} onto $\CC(\A)\@$,
hereby getting the $1$--form
\begin{equation}\label{Eq_2.7}
  \ThPPC\, :=\, {\tilde{\ell}}^* (\Theta_u)\, =\, \Lagr\, dt \@+\@ p_i\, \left(\?dq^i - \psi^i\, dt\?\right)\, :=\,
  - \Ham\, dt \@+\@ p_i\, dq^i
\end{equation}
henceforth referred to as the \emph{Pontryagin--Poincar\'e--Cartan form}.

\vspace{1pt}
Needless to say, the difference $\@\Ham := p_i\, \psi^i - \Lagr\@$,
known in the literature as the \emph{Pontryagin Hamiltonian}, is not
an Hamiltonian in the traditional sense.

\begin{remark}
The nature of the Pontryagin Hamiltonian may be understood by
pointing up that, in view of equations \eqref{Eq_1.6}, \eqref{Eq_1.35}, \eqref{Eq_1.36},  
the space $\@\mathcal{S}^\A\@$ is --- by construction --- 
a submanifold of $\@\L(\V)
\times_{\V} \H(\V)\@$ locally described by the equations
\begin{equation*}
F\/(\zeta, \mu)\, =\,g^\sigma\/(t, q\/(\zeta), \Q\/(\zeta))\, =\,  0\quad ,\quad 
  \sigma = 1, \ldots, n - r\@ 
\end{equation*}
for any $\@(\?\zeta\?,\?\mu\?) \in \L(\V) \times_{\V} \H(\V)\@$.

\smallskip
Hence, the manifold $\@\mathcal{S}^\A\@$ may be equivalently referred to both local
coordinates $\@t,q^i,p_i,z^A,\U\@$ and $\@t,q^i,p_i,z^A,p_0\@$,
related one another by the transformation
\[
\U\ =\ p_0\, +\, p_i\, \psi^{\?i}\/(t, q^1, \ldots, q^n, z^1, \ldots, z^r)
\] 

In the former circumstance, the section $\@\tilde{\ell} \colon
\CC(\A) \to \mathcal{S}^\A\@$ is locally represented by 
equation \eqref{Eq_1.49}, while in the latter case its local representation involves
the Pontryagin Hamiltonian $\@\Ham\/(t, q^i, p_i, z^A)\@$ in the form
\begin{equation}
p_0\, =\, -\,\Ham\/(t, q^i, p_i, z^A)
\end{equation}
\end{remark}

%%%%%%%%%%%%%%%%%%%%%%%%%%%%%%%%%%%%%%%%%%%%%%%%%%%%%%%%%%%%%%%%%%%%%%%%%%%%%%

\vspace{2pt}
\section{Application to Constrained Variational Calculus}
\subsection{Problem statement}\label{SSec_Problem statement}
As already mentioned, we shall consider a constrained abstract system $\?\mathcal{B}\?$.
Given a differentiable function $\@\Lagr\in\F\/(\A)\,$ on the
space $\A\@$ (\@called the \emph{Lagrangian\/}\@) and
denoted by $\@\GAMMA\@$ the lift to $\A\@$ of an admissible evolution 
$\@\gamma\@$ of the system, define
the \emph{action functional\/}
\begin{equation}\label{Eq_2.1}
\I\@[\gamma]:= \int_{\GAMMA} \Lagr\/(\/t, q^1, \ldots,q^n,z^1,\ldots,z^r\/)\,d\/t
\end{equation}

We intend to
characterize the extremals (\@if at all\@) of the functional
$\@\I\@[\gamma]\@$ among all the admissible evolutions $\@\gamma
\colon [\?t_0\?,\?t_1\?] \to \V\?$ connecting two given
configurations.

\vspace{2pt}
As we shall see, it turns out to be more convenient to approach the topic
by defining an alternative variational problem on the
manifold $\@\CC\/(\A)\?$ and proving that, under suitable hypotheses, 
the latter is equivalent to the original one.

\vspace{4pt}
\begin{remark}
The constrained variational problem based on the functional
(\ref{Eq_2.1}) may be also viewed as a  typical optimal control
 problem.

\vspace{.5pt} As in \S\@\ref{SSEc_Non-holonomic constraints}\@, the
admissibility of a given evolution of $\@\B\?$\vspace{.7pt} is
expressed in local coordinates by equation (\ref{Eq_1.7})\@,
expressing the derivatives $\tfrac{d q^i}{d t}\@$ in terms of a
smaller number of  variables $z^A,\;A=1,\ldots,r\@$.

Therefore, every concurrent assignment of both the values of
$z^1\/(t), \ldots, z^r\/(t)\@$ and of a point in the event space
$\?\V\@$ determines an admissible evolution of the system as the
solution of the ordinary differential equations (\ref{Eq_1.7}) with
the given initial conditions. This makes the $z^A$'s into the
controllers of the evolution %of the system
and, as such, they
usually go in the literature under the name of \emph{controls}.

In this sense, the search for the curves $\@\GAMMA = \GAMMA\/(t)\@$
along which the functional (\ref{Eq_2.1}) takes its extremal values
may be equivalently seen as the one for those particular controls
which \emph{optimize} the evolution of the system.

Actually,in the absence of specific assumptions on the nature of the manifold
$\@\A\@$, the functions $\?z^A\/(t)\@$, in themselves, have no
invariant geometrical meaning. In this respect,
attention should rather be shifted on \emph{sections\/}
$\@\sigma:\V\to\A\,$, locally expressed as $z^A = z^A\/(t, q^1, \ldots, q^n\/)\@$.
\end{remark}

%%%%%%%%%%%%%%%%%%%%%%%%%%%%%%%%%%%%%%%%%%%%%%%%%%%%%%%%%%%%%%%%%%%%%%%%%%%%%%

\subsubsection{The gauge structure}
Given any pair of $1$--forms
$\@\Lagr\@d\/t\@$\vspace{1pt} and $\Lagr\@'\@d\/t\@$ over $\@\A\@$,
their respective action integrals
$\@\I\@[\gamma]=\int_{\GAMMA}\Lagr\@d\/t\@$ and
$\@\I\@'\/[\gamma]=\int_{\GAMMA}\Lagr\@'\@d\/t\@$\vspace{.7pt} give
rise to the same extremal curves if the difference
$\?\big(\?\Lagr\@' - \Lagr\?)\,d\/t$ is an \emph{exact}\vspace{1pt}
differential. 

This is easily seen as, under this circumstance, the equality $\@\oint
\Lagr\@d\/t = \oint \Lagr\@'\@d\/t$ holds along any closed curve,
thereby entailing the relation
\[
\I\@'\/[\gamma_\xi]\, -\, \I\/[\gamma_\xi]\, =\, \int_{\GAMMA_\xi} \big(\?
\Lagr\@' - \Lagr\?\big)\@d\/t \equiv \int_{\GAMMA} \big(\?
\Lagr\@' - \Lagr\?\big)\@d\/t\
\]
for any deformation $\@\gamma_\xi\@$ vanishing at the end--points,
whence also
\[
\d /d \xi\@ \Big(\?\I\@'\/[\gamma_\xi]\, -\,
\I\/[\gamma_\xi]\?\Big)\, \equiv\, 0
\]

\vspace{4pt}
As a consequence, as far as a variational problem based on
the functional \eqref{Eq_2.1} is concerned, the Lagrangian function
$\@\Lagr \in \F(\A)\@$ is defined up to an equivalence relation of
the form
\begin{equation}\label{Eq_2.2}
  \Lagr\ \sim\ \Lagr\@'\qquad \Longleftrightarrow\qquad \Lagr\@'\, -\, \Lagr\ =\ \d f /d t\ ,\quad f \in \F(\V)
\end{equation}
Otherwise stated, the real information is not brought so much by
$\@\Lagr\@$ in itself as by a whole family of Lagrangians,
equivalent to each other in the sense expressed by equation
\eqref{Eq_2.2}.

The significance of the arguments developed so far relies actually
on the fact, explicitly pointed out by equations \eqref{Eq_1.25},
\eqref{Eq_1.26}, that the representation of an arbitrary section
$\@\ell \colon \A\to \L(\A)\@$ involves exactly this family of
Lagrangians, henceforth denoted by $\@\Lambda(\ell)\@$. A
straightforward check shows that a necessary and sufficient
condition for two sections $\@\ell\@$ and $\@\ell'\@$ to fulfil
$\@\Lambda(\ell) = \Lambda(\ell')\@$ is that the difference $\@\ell'
- \ell\@$, viewed as a function over $\@\A\@$, be itself of the form
\begin{equation}\label{Eq_2.3}
 \ell'\, -\, \ell\ =\ \d f /d t\ ,\qquad f \in \F(\V)
\end{equation}
Thus we see that, within our geometrical framework, the equivalence
relation \eqref{Eq_2.2} between \emph{functions} is replaced by the
almost identical relation \eqref{Eq_2.3} between \emph{sections}.
Intuitively, the latter is a sort of ``active counterpart'' of the
transformation law \eqref{Eq_1.26} for the \emph{representation} of
a given section $\@\ell\@$ under arbitrary changes of the
trivialization $u \colon P \to \R$.

%%%%%%%%%%%%%%%%%%%%%%%%%%%%%%%%%%%%%%%%%%%%%%%%%%%%%%%%%%%%%%%%%%%%%%%%%%%%%%
\subsection{Extremals}\label{SSec_Extremals}
To start with, we observe that the algorithm described
in \S\@\ref{SSec_Further developments} allows to deduce 
--- in a \emph{canonical} way ---
 a Pontryagin--Poincar\'e--Cartan form
\begin{equation}\label{Eq_2.4vera}
  \ThPPC\, =\, \Lagr\, dt \@+\@ p_i\, \left(\?dq^i - \psi^i\, dt\?\right)\, =\,
  - \Ham\, dt \@+\@ p_i\, dq^i
\end{equation}
on the manifold $\@\CC(\A)\@$ from the only knowledge of the input data of the problem, namely
\begin{itemize}
  \item[\textit{i)}] the non--holonomic constraints,
 described by the imbedding $i \colon \A \to j_1\/(\V)$ and
 locally expressed by the equations \eqref{Eq_1.6a}\@;\vspace{2pt}
  \item[\textit{ii)}] the non--holonomic Lagrangian section  $\@\ell\@$, locally represented in
  the form \eqref{Eq_1.25}, being $\@\Lagr\/(t, q^i, z^A)\@$ the given Lagrangian function.
\end{itemize}

\medskip To understand the role of the form \eqref{Eq_2.4vera} in the present 
constrained variational context, we next focus
on the fibration $\CC(\A)\rarw{\u}\V$, given by the composite
map $\@\CC(\A)\to\Pi\/(\V)\to\V$ 
and we define an action integral over $\@\CC\/(\A)\@$, assigning to
each section $\@\tgamma:\@\V\to\CC(\A)$, locally expressed by
$q^i=q^i\/(t)\@,\,z^A=z^A\/(t)\@, p_i=p_i\/(t)\@$, the
real number
\begin{equation}\label{Eq_2.8}
\I\@[\@\tgamma\@]\,:=\,\int_{\tgamma}\?\ThPPC\,=\,\int_{t_0}^{t_1}\!
\left(\@p_i\,\frac{d q^i}{dt} -\Ham\@\right)d\/t
\end{equation}

\vspace{2pt} Given any deformation $\tgamma_\xi\@$\vspace{1pt}
preserving the end--points of $\?\u\cdot\tgamma\@$ and indicating
with\linebreak $\@\tilde{X}=X^i\/(t)\@\big(\de /de
{q^i}\big)_{\tgamma}\?+\? \@\Gamma^A\/(t)\,\big(\de /de
{z^A}\big)_{\tgamma}\?+ \@ \Pi_i\/(t)\@\big(\de /de
{p_i}\big)_{\tgamma}$ the corresponding infinitesimal deformation,
we get the relation
\begin{multline*}
\d{\@\I\@[\@\tilde{\gamma}_\xi\@]} /d
{\xi}\,\bigg|_{\@\xi=0}\,=\@\int_{t_0}^{t_1} \left[\bigg(\@\d\?{q^i}
/d t\@-\@\de \Ham /de {p_i}\bigg)\@\Pi_{\?i}\@- \@\bigg(\@\d {\?p_i}
/d t\@+\@\de {\Ham} /de {q^i}\bigg)\@X^i\@- \@\de {\Ham} /de
{z^A}\,\Gamma^A\@\right]d\/t
\end{multline*}

From the latter, taking the conditions $\?X^i\/(t_0)=X^i\/(t_1)=0\@$
into account, it is easy to conclude that the vanishing of $\@\d\@\I
/d {\xi}\@\big|_{\@\xi=0}\@$ under arbitrary deformations of the
given class is mathematically equivalent to the system\vspace{.5pt}
\begin{subequations}
\begin{align}
  &\d\/q^i/d t\,=\,\de\?\Ham/de{p_i}\,=\,\psi^i(t, q^i, z^A)\label{Eq_2.9a}\\[3pt]
  & \d\/p_i/d t\,=\,-\@\de\?\Ham/de{q^i}\,=\, -\@ p_k\@\de\? {\psi^k} /de {q^i}\@ +\@ \de\? {\Lagr}/de {q^i}\\[3pt]
  &\de\?\Ham/de{z^A}\,=\, p_i\@\de\? {\psi^i} /de {z^A}\@ -\@ \de\? {\Lagr}/de {z^A}\, =\, 0
\end{align}
\end{subequations}

\medskip \noindent Equation \eqref{Eq_2.9a} shows that the extremal curves of
the functional \eqref{Eq_2.8} are kinematically admissible. As they
are extremals with respect to arbitrary deformations vanishing at
the end--points, this automatically makes them extremals with
respect to the narrower class of admissible deformations as well.

Therefore, we can state that every ``\@free\/'' extremal of the
functional \eqref{Eq_2.8} projects onto an extremal curve $\gamma
\colon\/q^i = q^i(t)$ of the original problem.

\medskip
Conversely, given any admissible 
solution $\@\g\@$ of the assigned variational problem, it seems beforehand hard to establish if 
and under which hypotheses there exists at least one 
extremal $\@\tgamma\@$ of the functional \eqref{Eq_2.8} projecting onto $\g\@$.
Heuristically, the associated variational problem in $\?\CC\/(\A)\@$
may be viewed as the study of the functional \eqref{Eq_2.1} where
the kinematical admissibility condition \eqref{Eq_1.7} does not play
the role of an \emph{a priori} request upon sections anymore but is
instead retrieved afterwards by the method of Lagrange
multipliers. As a consequence, it can be reasonable that,
under suitable hypotheses, the equivalence between the
two variational problems could be proved. 
Let us investigate this point.

\smallskip
Referring to \cite{MBP}\ for details, we recall that an admissible
section $\gamma \colon [\@t_0\? ,\? t_1] \to \V$ is called
\emph{ordinary} if and only if all its infinitesimal deformations
vanishing at the end--points are tangent to a finite deformation
with fixed end--points. A remarkable subclass of the ordinary
sections is formed by the \emph{normal} ones. We will dedicate the
final paragraph to discuss their role in the present
context.\\

\smallskip \noindent We now state the following
\begin{theorem}\label{Th_2.1}
  Every ordinary extremal $\?\gamma\@$ of the functional (\ref{Eq_2.1}) is the projection of at least one extremal
  $\?\tgamma\@$ of the functional (\ref{Eq_2.8})\@.
\end{theorem}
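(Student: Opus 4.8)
The plan is to take an ordinary extremal $\gamma$ of the constrained problem \eqref{Eq_2.1}, lift it to an admissible section $\GAMMA\colon[t_0,t_1]\to\A$ via \eqref{Eq_gamma<=>GAMMA}, and then \emph{construct} the costate functions $p_i(t)$ so that the resulting section $\tgamma\colon t\mapsto(q^i(t),z^A(t),p_i(t))$ of $\CC(\A)$ satisfies the Euler--Lagrange system \eqref{Eq_2.9a}--(2.9c). The natural device is the adjoint (costate) linear ODE: prescribe $p_i(t)$ as the solution of
\begin{equation*}
\frac{dp_i}{dt}\,=\,-\,p_k\,\de{\psi^k}/de{q^i}\,+\,\de{\Lagr}/de{q^i}
\end{equation*}
with a terminal condition to be chosen; this is exactly \eqref{Eq_2.9a}b and is a linear (hence globally solvable on $[t_0,t_1]$) equation once $q^i(t),z^A(t)$ are fixed. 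Equation \eqref{Eq_2.9a} is automatic, since $\GAMMA$ is admissible so $\dot q^i=\psi^i$. The entire content of the theorem is therefore squeezed into the algebraic stationarity condition \eqref{Eq_2.9a}c, namely $p_i\,\partial\psi^i/\partial z^A=\partial\Lagr/\partial z^A$; this must be forced to hold \emph{identically in $t$}, not merely at one instant.

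The key step, and the place where ordinariness enters, is to identify the freedom available. First I would compute the first variation of $\I[\gamma]$ along an admissible deformation directly in the state space: using the variational equation from \S\ref{SSec_Infinitesimal deformations of sections}, one gets
\begin{equation*}
\d{\I[\gamma_\xi]}/d\xi\Big|_{\xi=0}\,=\,\int_{t_0}^{t_1}\!\left(\de{\Lagr}/de{q^i}\,X^i+\de{\Lagr}/de{z^A}\,\Gamma^A\right)dt,
\end{equation*}
where $X^i,\Gamma^A$ are constrained by $\dot X^i=(\partial\psi^i/\partial q^k)X^k+(\partial\psi^i/\partial z^A)\Gamma^A$ and $X^i(t_0)=X^i(t_1)=0$. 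Multiplying the costate equation by $X^i$, the admissibility/variational equation by $p_i$, adding, and integrating by parts (the boundary terms drop because $X^i$ vanishes at the ends) converts this into
\begin{equation*}
\d{\I[\gamma_\xi]}/d\xi\Big|_{\xi=0}\,=\,\int_{t_0}^{t_1}\!\left(\de{\Lagr}/de{z^A}-p_i\,\de{\psi^i}/de{z^A}\right)\Gamma^A\,dt\,=\,-\int_{t_0}^{t_1}\de{\Ham}/de{z^A}\,\Gamma^A\,dt .
\end{equation*}
Since $\gamma$ is an \emph{extremal}, the left side vanishes for every admissible deformation vanishing at the endpoints; since $\gamma$ is \emph{ordinary}, every such infinitesimal deformation $(X^i,\Gamma^A)$ is tangent to a genuine finite admissible deformation with fixed endpoints, so this vanishing is not vacuous and in fact holds for the full linear space of admissible $(X^i,\Gamma^A)$. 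A standard argument — solve the linear variational equation for $X^i$ with $X^i(t_0)=0$ given an arbitrary $\Gamma^A(t)$ of compact support in $(t_0,t_1)$, noting that the extra constraint $X^i(t_1)=0$ only restricts $\Gamma^A$ by finitely many linear functionals — together with the fundamental lemma of the calculus of variations then yields $\partial\Ham/\partial z^A\equiv 0$ on $[t_0,t_1]$, which is \eqref{Eq_2.9a}c. One must be slightly careful here: the endpoint condition $X^i(t_1)=0$ is $n$ scalar constraints on $\Gamma^A$, and I would handle this either by choosing the terminal value $p_i(t_1)$ to absorb exactly these $n$ directions (a transversality-type adjustment), or by the observation in \cite{MBP} that ordinariness guarantees enough admissible variations; this is the technical heart of the proof.

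Finally I would assemble: the constructed $\tgamma$ satisfies \eqref{Eq_2.9a}, \eqref{Eq_2.9a}b by construction and \eqref{Eq_2.9a}c by the argument above, hence is an extremal of \eqref{Eq_2.8}, and $\u\cdot\tgamma=\gamma$ by construction, proving the claim; the lifted section is defined on all of $[t_0,t_1]$ because the costate ODE is linear. The main obstacle I anticipate is precisely the interplay between the fixed-endpoint constraint $X^i(t_1)=0$ and the freedom in $\Gamma^A$: without the ordinariness hypothesis the admissible infinitesimal deformations need not fill out the space one needs for the fundamental lemma, and the choice of the free terminal data $p_i(t_1)$ must be made to compensate for the $n$ missing directions — getting this bookkeeping right (and invoking the precise characterization of ordinary sections from \cite{MBP}) is where the real work lies, the rest being the standard Lagrange-multiplier/adjoint-equation manipulation.
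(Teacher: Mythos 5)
Your proposal follows essentially the same route as the paper: lift $\gamma$ to $\GAMMA$, define the costates $p_i(t)$ by the adjoint equation, integrate the first variation by parts, and reduce everything to the algebraic condition $p_i\,\partial\psi^i/\partial z^A=\partial\Lagr/\partial z^A$. The one step you leave open --- and the only place where your sketch, read literally, would fail --- is the claim that ``the fundamental lemma then yields $\partial\Ham/\partial z^A\equiv 0$'': since $\Gamma^A$ is not free but is constrained by the $n$ conditions $X^i(t_1)=0$, equivalently by $\int_{t_0}^{t_1} A^i_{\ j}\,(\partial\psi^j/\partial z^A)\,\Gamma^A\,dt=0$ where $A^i_{\ j}(t)$ is a fundamental matrix of the homogeneous adjoint system, the vanishing of $\int_{t_0}^{t_1}\bigl(\partial\Lagr/\partial z^A-p_k\,\partial\psi^k/\partial z^A\bigr)\Gamma^A\,dt$ on that finite-codimension subspace only gives that the density is a linear combination $\beta_k\,A^k_{\ j}\,\partial\psi^j/\partial z^A$ of the $n$ constraint densities, not that it vanishes. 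The paper closes this by exactly your first suggested fix: the general solution of the costate equation is $p_i(t)+\beta_j A^j_{\ i}(t)$, so the $n$ free parameters $\beta_j$ absorb precisely those $n$ directions, and $\bar p_i=p_i+\beta_j A^j_{\ i}$ then satisfies both the adjoint equation and the algebraic condition. Your alternative fix --- that ordinariness ``guarantees enough admissible variations'' to dispense with this --- is not how ordinariness enters: its only role is to ensure that every infinitesimal deformation vanishing at the end-points is tangent to a finite admissible deformation with fixed end-points, so that extremality of $\gamma$ really does force the first variation to vanish on the whole linear space of such $(X^i,\Gamma^A)$; it does not remove the $n$ end-point constraints on $\Gamma^A$.
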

\begin{proof}
 Given an ordinary extremal $\@\gamma\@$ of (\ref{Eq_2.1}), let $\GAMMA\, \colon q^i = q^i(t),\, z^A = z^A(t)$ be its lift into $\A\@$. As discussed in \S\ref{SSec_Infinitesimal deformations of sections}\@, an admissible infinitesimal deformation of $\GAMMA$ is then a vector field\linebreak $\@\hat{X}=X^i\/(t)\@\big(\de /de
{q^i}\big)_{\GAMMA}+ \@\Gamma^A\/(t)\,\big(\de /de
{z^A}\big)_{\GAMMA}\@$ satisfying the variational equation
\begin{equation}\label{Eq_2.10}
\d {X^i} /d t\,=\,\bigg(\de {\psi^i} /de
{q^k}\bigg)_{\!\GAMMA}\?X^k\, +\, \bigg(\de {\psi^i} /de
{z^A}\bigg)_{\!\GAMMA}\@\Gamma^A
\end{equation}
more suitably written as
\begin{equation}
\d\, /d t\@ \big({A^i}_j\, X^j\big)\, =\, {A^i}_j\, \de
{\psi^j} /de {z^A}\ \Gamma^A
\end{equation}
being $\@{A^i}_{\!j}(t)\@$ any non--singular solution of the matrix
equation
\begin{equation}\label{Eq_2.12}
  \d\/ {{A^i}_j} /d t\, +\, {A^i}_k\, \bigg(\de {\/\psi^k} /de {q^j}\bigg)_{\GAMMA} =\ 0
\end{equation}

\smallskip \noindent
Furthermore, setting $\@X^i(t_0) = 0\@$, the validity of the
relation
\begin{equation}\label{Eq_2.13}
  0\, =\, \int_{t_0}^{t_1} \!{A^i}_j\, \bigg(\de\/ {\psi^j} /de {z^A}\bigg)_{\GAMMA}\, \Gamma^A\ d\/t
\end{equation}
is then a necessary and sufficient condition for $\?X^i(t_1)\@$ to
vanish as well. The whole class of the infinitesimal deformations of
$\@\GAMMA\@$ vanishing at its end--points is therefore in
bijective correspondence with the totality of vertical vector fields
$\@\Gamma^A\/(t)\,\big(\de /de {z^A}\big)_{\GAMMA}\@$ defined along
$\?\GAMMA\@$ and satisfying the condition \eqref{Eq_2.13}\@.

\medskip
We next introduce
$\@n\?$ new functions $\?p_i = p_i(t)\@$
subject to the conditions
\begin{equation}\label{Eq_2.14}
  \d\/ {p_i} /d t\, +\, p_k\@ \bigg(\de\, {\psi^k} /de {q^i}\bigg)_{\GAMMA}\, =\,
  \bigg(\de {\Lagr} /de {q^i}\bigg)_{\GAMMA}
\end{equation}
In view of equation \eqref{Eq_2.12}\@, the functions $\@p_i\/(t)\@$ 
are defined up to a transformation
\begin{equation}
  p_i(t)\, \longrightarrow\, \bar{p}_i\/(t)\,=\, p_i\/(t)\, +\, \beta_j\@ {A^j}_i (t)
\end{equation}
with  $\big(\?\beta_1, \ldots, \beta_n\?\big) \in \R^n$. Therefore,
the thesis is proved as soon as we show that the stated hypotheses
entail the existence of at least
one choice of $\?\big(\?\beta_1, \ldots, \beta_n\?\big)\@$ such
that the resulting functions $\@\bar{p}_i\/(t)\@$
could be used to lift the curve $\GAMMA$ to an extremal
$\tgamma \colon \left[\?t_0\?,\?t_1\right] \to \CC(\A)\@$
of the functional \eqref{Eq_2.8}.

\medskip
Now, the extremality of the ordinary evolution $\?\gamma\@$
means that
\begin{equation}\label{Eq_2.16}
 \d{\@\I\@[\@\g_\xi\@]} /d
{\xi}\,\bigg|_{\@\xi=0}\,=\, \int_{t_0}^{t_1} \!\bigg( X^i\, \de \Lagr /de {q^i}\ +\ \Gamma^A\, \de \Lagr /de {z^A}\?\bigg)_{\GAMMA}\ d\/t\ =\ 0
\end{equation}
for every infinitesimal deformation of $\@\GAMMA\@$
vanishing at its end--points.

\smallskip \noindent
Moreover, taking the conditions $\@X^i(t_0) = X^i(t_1) = 0\@$ as
well as the equations \eqref{Eq_2.10} into account, we have
\begin{equation*}
  \begin{split}
    \int_{t_0}^{t_1}\!\!X^i\@ \bigg(\de \Lagr /de {q^i}\bigg)_{\GAMMA}\, d\/t\, &= 
    \int_{t_0}^{t_1}\!\!X^i\@ \bigg(\d\/ {p_i} /d t\, +\, p_k\@
    \de {\psi^k} /de {q^i}\bigg)_{\GAMMA}\, d\/t =\\[1em]
    &= \int_{t_0}^{t_1} \!\!p_k\@ \bigg(-\d\/ {X^k} /d t\, +\,
    X^i\@ \de {\psi^k} /de {q^i}\bigg)_{\GAMMA}\, d\/t\, =
     -\@ \int_{t_0}^{t_1} \!\!p_k\, \bigg(\de {\psi^k} /de {z^A}\bigg)_{\GAMMA}\, \Gamma^A\, d\/t
  \end{split}
\end{equation*}
Substituting into equation \eqref{Eq_2.16}, we conclude that
\begin{equation}
  \int_{t_0}^{t_1} \!\bigg[ -p_k\, \bigg(\de {\psi^k} /de {z^A}\bigg)_{\GAMMA}\, +\,
  \bigg(\de \Lagr /de {z^A}\bigg)_{\GAMMA}\ \bigg]\, \Gamma^A\ d\/t\ =\ 0
\end{equation}
for each $\@\Gamma^A = \Gamma^A(t)\@$ fulfilling the condition
\eqref{Eq_2.13}. In order for this to happen it is then necessary and
sufficient the validity of the linear relation
\[
-\@ p_k\, \bigg(\de {\psi^k} /de {z^A}\bigg)_{\GAMMA}\,
+\,
  \bigg(\de \Lagr /de {z^A}\bigg)_{\GAMMA}\, =\, \beta_k\@ {A^k}_j\@
  \bigg(\de {\psi^j} /de {z^A}\bigg)_{\GAMMA}
\]
or --- what is the same --- the existence of at least one solution
$\@ \bar{p}_i(t)\, =\, p_i(t)\, +\, \beta_j\@ {A^j}_i\@$ of the
system \eqref{Eq_2.14} such that
\begin{equation}\label{Eq_2.18}
  \bar{p}_k\, \bigg(\de {\psi^k} /de {z^A}\bigg)_{\GAMMA}\, -\,
  \bigg(\de \Lagr /de {z^A}\bigg)_{\GAMMA}\, =\, 0
\end{equation}
Equations \eqref{Eq_2.14} and \eqref{Eq_2.18}, together with the
kinematical admissibility condition (\@true by hypothesis\@) are
exactly the Euler--Lagrange equations for the functional
\eqref{Eq_2.8}. Therefore, the resulting curve $\?\tgamma \colon q^i
= q^i(t),\, z^A = z^A(t),\, \bar{p}_i = \bar{p}_i(t)\@$ is an extremal of the
functional \eqref{Eq_2.8} which projects onto $\?\gamma\@$.
\end{proof}

\smallskip
Collecting all results, we have just shown that, as far as the \emph{ordinary\/} 
evolutions are concerned, the
original constrained variational problem in the event space is
 equivalent to a canonically associated \emph{free\/} one in %the manifold
$\@\CC\/(\A)\@$.

\bigskip \noindent
\begin{remark}[Same problem, equivalent solution] 
There is another possible approach to the problem outlined in \S\@\ref{SSec_Problem statement}, which is slightly different but completely equivalent to the one taken so far.
Apparently it just complicates matters without giving any significant
advantage. On the other hand, it seems to be the most faithful
translation of the original Pontryagin's treatment of the topic
\cite{Pontryagin} into the present geometrical context. Hence --- at least for
historical reasons --- it is worth telling about.

\bigskip \noindent The algorithm relies on the following considerations:

\medskip \noindent \textbf{(\/i\/)} \,Given any section $\gamma \colon \left[\@t_0\?,\?
t_1\?\right] \to P$ of the bundle of affine scalars, let $\GAMMA$ denote the restriction to 
$\?\L\/(\V)\@$ of its jet--extension. 

\smallskip
The input data of the assigned problem
are taken into account by the introduction of a notion of \emph{admissibility} for
 $\@\g\@$. This is accomplished by requiring the jet--extension $\@\GAMMA\@$ 
to belong to a submanifold
$\@\AA\@$ of $\?\L(\V)\@$, locally described by the equations
\begin{equation}\label{Eq_2.20}
  \Q^i = \psi^i(t, q^i, z^A)\ ,\quad \U = \Lagr(t, q^i, z^A)
\end{equation}

In other words, the simultaneous assignment of
both the kinetic constraints and the Lagrangian function are used to express
the submanifold $\@\AA\@$ as the image
$\@\ell(\A)\@$. 

\smallskip
In this way, every admissible section $\@q^i =
q^i(t)\@$ in $\V$ determines, up to the initial value $\@u\/(t_0)\@$, a corresponding admissible
section $\@q^i =q^i(t),\, u = u(t)\@$ of $P\@$.

\smallskip
Compared with the main approach,
the present formulation just replaces the section $\?\ell \colon \A
\to \L(\A)\@$ with the image space $\AA = \ell(\A)\@$, viewed as
a submanifold of $\@\L(\V)\@$. 
As we have seen, the submanifold $\@\AA\@$ and, consequently,
the section $\?\ell\@$ are regarded
 as data of the problem and therefore the representation $\@\U = \Lagr(t, q^i,
z^A)\@$ is affected only by passive gauge transformations. The same
variational problem is therefore bred by different submanifolds related
one another by the action of the gauge group.

\medskip \noindent \textbf{(\/ii\/)} \,  
Let us consider the constrained variational problem with fixed end--points 
on the manifold $\L(\V)\@$\vspace{.5pt}, based on the
functional
\begin{equation}\label{Eq_2.19}
  \I\/[\gamma] := \int_{\GAMMA}\, \!\U\ d\/t
\end{equation}
$\GAMMA$ denoting the jet--extension of an admissible section $\gamma \colon \left[\@t_0\?,\?
t_1\?\right] \to P$. The stated problem does not
depend on a particular choice of the gauge, as the $1$--form $\U\, dt$ is well--defined in
$\L\/(\V)\@$\vspace{.3pt} up to a term $\@\dot{f}\, dt\@$.

\medskip \noindent
In local coordinates, setting $\@\gamma \colon\, q^i = q^i(t)\, ,\ u = u(t)\@$, we have
\[
\int_{\GAMMA}\, \!\U\ d\/t = u(t_1) - u(t_0)
\]
and so, being the values of  $q^i(t_0)$ and $q^i(t_1)$ already
fixed by the boundary conditions, the problem consists in finding a curve
$\gamma$ over which the increment $u(t_1) - u(t_0)$ becomes stationary
and whose projection onto $\@\V\@$ joins the assigned end--points.

\medskip \noindent \textbf{(\/iii\/)} \,The submanifold $\@\AA
\to \L(\V)\@$ is lifted up onto a submanifold $\@\CC(\AA) \to
\mathcal{S}\@$ whether by identifying $\?\CC(\AA)\@$ with the image
space $\@\tilde{\ell}(\CC(\A))\@$ or by pulling back
$\@\mathcal{S}\@$ onto $\@\AA\@$ by means of the commutative diagram
\begin{equation}
  \begin{CD}
    {\CC(\AA)}   @>{\tilde{\jmath}}>>  {\mathcal{S}} \\
    @VVV                               @VVV \\
    {\AA}       @>{\jmath}>>         {\L(\V)}
  \end{CD}
\end{equation}

All the same, the embedding $\@\CC(\AA) \rarw{\tilde{\jmath}}
\mathcal{S}\@$ is fibred onto $\@\Pi(\V)\@$ and its expression in
coordinate is formally identical to equations \eqref{Eq_2.20} which
are involved in the representation of the
submanifold $\@\AA\@$.

\smallskip
It is now possible to make use of the form \eqref{Eq_1.44} to provide
the manifold $\?\CC(\AA)\@$ with the $1$--form
\begin{equation}\label{Eq_2.22}
  \tilde{\jmath}^{\@ *} \left(\Theta_u\right)\, =\, \Lagr\, dt + p_i\, \left(dq^i - \psi^i\, dt\right)
\end{equation}
and, as a consequence, to define an action integral
by the integration of \eqref{Eq_2.22} along any section
$\check{\gamma} \colon [\@t_0\?,\?t_1\?] \to \CC(\AA)$.

Once again, this merely reproduces in the image space $\?\CC(\AA)\@$
the construction previously carried on in \S~\!\ref{SSec_Further developments}. In other words,
the $1$--form \eqref{Eq_2.22} is simply the image of the
\emph{Pontryagin--Poincar\'e--Cartan} form \eqref{Eq_2.7} under the
diffeomorphism $\@\tilde{\ell} \colon \CC(\A) \to \CC(\AA)\@$.
\end{remark}

%%%%%%%%%%%%%%%%%%%%%%%%%%%%%%%%%%%%%%%%%%%%%%%%%%%%%%%%%%%%%%%%%%%%%%%%%%%%%%

\subsection{The case of normal sections}
A significant role in the study of the variational problem based on
the functional \eqref{Eq_2.8} is played by the choice of the
Lagrangian section as $\?\U = \dot{f}(t,q)\@$\vspace{0.7pt}, which
is related to the particular
situation determined by the ansatz $\@\Lagr\/(t, q^i, z^A) = 0\@$.  

Under the stated circumstance, the functional
\begin{equation}\label{Eq_2.23}
\I_0\@[\@\tgamma\@]\,:=\,\int_{t_0}^{t_1}\!
\bigg(\@p_i\,\frac{d q^i}{dt} -\psi^i\@\bigg)d\/t
\end{equation}
may therefore be used to single out a purely geometrical variational
problem in the manifold $\@\CC\/(\A)\@$.

\smallskip \noindent The corresponding extremal curves are easily seen to
satisfy the Euler--Lagrange equations
\begin{equation}\label{Eq_2.24}
     \d\/{q^i} /d t\, =\, \psi^i\/(t, q^i, z^A)\ , \quad
     \d\/{p_i} /d t\,=\, -\@ p_k\@\de\? {\psi^k} /de {q^i}\ ,\quad
     p_k\@\de\? {\psi^k} /de {z^A}\, =\,0
\end{equation}

\medskip In view of equation \eqref{Eq_1.7}, every admissible evolution $\@\gamma \colon
[\@t_0\?,\?t_1\?]\to \V\@$ is the projection of at least
one extremal $\@\tgamma\@$ of the functional \eqref{Eq_2.23}. %, namely of the curve
For this reason, the projection of $\@\tgamma\@$ under the map
$\@\CC\/(\A)\to\A\@$ coincides with the lift
$\@\hat{\gamma}:[\@t_0,t_1\@]\to\A\@$.

The extremals of \eqref{Eq_2.23} %projecting onto any admissible $\@\gamma\@$
are therefore in a bijective correspondence with the solutions
$\@p_i\/(t)\@$ of the homogeneous system \eqref{Eq_2.24}, with the
functions $\@q^i\/(t),\@z^A\/(t)\@$ regarded as given. In other
words, the totality of extremals $\?\tgamma\@$ of the functional
\eqref{Eq_2.23} form a finite dimensional vector space over
$\?\R\@$, whose dimension will be referred to as the
\emph{abnormality index} of $\@\gamma := \u\?\cdot\? \tgamma\@$.

\smallskip
\begin{definition}
  An admissible curve  $\@\gamma \colon [\@t_0\?,\?t_1\?]\to \V\@$ is called
  \emph{normal} whenever its abnormality index vanishes. Otherwise $\?\gamma\@$ is said \emph{abnormal}.
\end{definition}

As a consequence, a \emph{normal} curve $\@\gamma \colon
[\@t_0\?,\?t_1\?]\to \V\@$ is the projection of a \emph{unique}
extremal of the functional \eqref{Eq_2.23}, namely of the curve
\[
\tgamma\, \colon\ q^i = q^i(t)\ ,\quad z^A = z^A(t)\
,\quad p_i(t)\@ =\@ 0
\]

\bigskip Coming back to the study of the main variational problem
based on the action integral \eqref{Eq_2.1}, we can now state

\begin{theorem}
  Every normal extremal $\@\gamma\?$ of the functional (\ref{Eq_2.1}) is the projection of exactly one extremal
  $\@\tgamma\?$ of the functional (\ref{Eq_2.8}).
\end{theorem}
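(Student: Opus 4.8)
The plan is to split the statement into an existence part, which is essentially free, and a uniqueness part, which carries all the content. For existence, recall that --- as noted just before the statement --- the normal sections form a subclass of the ordinary ones; hence a normal extremal $\gamma$ of \eqref{Eq_2.1} is in particular an ordinary one, and Theorem \ref{Th_2.1} applies word for word, producing at least one extremal $\tgamma$ of \eqref{Eq_2.8} with $\gamma = \u\cdot\tgamma$. Nothing further is needed here.

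For uniqueness I would argue by superposition. Let $\GAMMA\colon q^i = q^i(t),\ z^A = z^A(t)$ be the lift of $\gamma$ to $\A$ --- uniquely determined by $\gamma$ through \eqref{Eq_gamma<=>GAMMA} --- and suppose $\tgamma_1$ and $\tgamma_2$ are two extremals of \eqref{Eq_2.8} both projecting onto $\gamma$. Their $q^i$ and $z^A$ components necessarily coincide with those of $\GAMMA$, so $\tgamma_1$ and $\tgamma_2$ differ only in their fibre components $p_i^{(1)}(t)$ and $p_i^{(2)}(t)$, each of which satisfies the Euler--Lagrange system \eqref{Eq_2.14}--\eqref{Eq_2.18} of the functional \eqref{Eq_2.8}. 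Since that system is affine in the $p_i$, the difference $\delta p_i(t) := p_i^{(1)}(t) - p_i^{(2)}(t)$ satisfies the homogeneous equations
\begin{equation*}
\frac{d\,\delta p_i}{dt}\,=\,-\,\delta p_k\,\frac{\partial \psi^k}{\partial q^i}\ ,\qquad \delta p_k\,\frac{\partial \psi^k}{\partial z^A}\,=\,0
\end{equation*}
which --- together with the kinematical admissibility of $\gamma$, true by hypothesis --- are exactly the equations \eqref{Eq_2.24} defining the extremals of the purely geometrical functional \eqref{Eq_2.23}. Thus $\tgamma'\colon q^i = q^i(t),\ z^A = z^A(t),\ p_i = \delta p_i(t)$ is an extremal of \eqref{Eq_2.23} projecting onto $\gamma$.

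It remains only to invoke normality: since the abnormality index of $\gamma$ vanishes, the unique extremal of \eqref{Eq_2.23} lying over $\gamma$ is $p_i(t)\equiv 0$, whence $\delta p_i(t)\equiv 0$, i.e. $p_i^{(1)} = p_i^{(2)}$ and $\tgamma_1 = \tgamma_2$. I do not foresee a genuine obstacle: the whole point is that the ``gap'' between two admissible lifts of a fixed extremal is itself a solution of the associated homogeneous (abnormal) problem, so that non-uniqueness in \eqref{Eq_2.8} is measured precisely by the abnormality index. The only mild care points are verifying that the base components $q^i, z^A$ of the two lifts must agree (immediate from \eqref{Eq_gamma<=>GAMMA}) and that the subtraction of \eqref{Eq_2.14}--\eqref{Eq_2.18} reproduces \eqref{Eq_2.24} on the nose (immediate from linearity in $p_i$).
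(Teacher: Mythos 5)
Your proposal is correct and follows essentially the same route as the paper: existence via the inclusion of normal sections among ordinary ones together with Theorem \ref{Th_2.1}, and uniqueness by observing that the difference of the $p_i$--components of two lifts satisfies the homogeneous Euler--Lagrange system \eqref{Eq_2.24}, hence is an extremal of \eqref{Eq_2.23}, which normality forces to vanish. The only cosmetic difference is that you make explicit the (correct) observation that the $q^i,z^A$ components of the two lifts must coincide with those of $\GAMMA$, a point the paper leaves implicit.
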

\begin{proof}
  The crucial point, that we will not demonstrate here, is that the
  normal curves form a subset of the ordinary ones.\footnote{See \cite{MBP}, Appendix B.}

  \noindent In view of Theorem \ref{Th_2.1}, this entails the extremal
  $\@\gamma\?$ to be the projection of at least one extremal $\@\tgamma \colon q^i =
  q^i(t),\, z^A = z^A(t),\,  p_i = p_i(t)\@$ of the functional
  \eqref{Eq_2.8}. We still have to prove its uniqueness.

  \smallskip \noindent To this end, we suppose the existence of a second extremal
  projecting onto $\@\gamma\?$, expressed in coordinate as
  ${\tgamma}\@' \colon q^i =
  q^i(t),\, z^A = z^A(t),\,  p_i = \tau_i(t)$.

  Taking equations \eqref{Eq_2.24}
  into account, it is easily seen that the contemporaneous
  validity of the  Euler--Lagrange equations for both the curves
  $\@{\tgamma}\@$ and $\@{\tgamma}\@'\@$
\begin{equation*}
     \d\/{q^i} /d t\, =\, \psi^i(t, q^i, z^A)\ ,\quad
     \d\/p_i /d t\@ +\@ p_k\@\de\? {\psi^k} /de {q^i}\, =\, \de\? {\Lagr} /de {q^i}\ ,\quad
     p_k\@\de\? {\psi^k} /de {z^A}\, =\, \de\? \Lagr /de {z^A}
\end{equation*}
\begin{equation*}
     \d\/{q^i} /d t\, =\, \psi^i(t, q^i, z^A)\ ,\quad
     \d\/\tau_i /d t\@ +\@ \tau_k\@\de\? {\psi^k} /de {q^i}\, =\, \de\? {\Lagr} /de {q^i}\ ,\quad
     \tau_k\@\de\? {\psi^k} /de {z^A}\, =\, \de\? \Lagr /de {z^A}
\end{equation*}
 makes the curve $\@q^i=q^i(t),\@z^A=z^A(t),\@p_i=p_i(t)\/-\/\tau_i(t)\@$ an
  extremal of the functional \eqref{Eq_2.23}.

  Since $\@\gamma\?$ is --- by hypothesis --- a \emph{normal} curve, this
  implies
  $\@p_i(t) = \tau_i(t)\@$ and hence $\@{\tgamma} \equiv
  {\tgamma}\@'$.
\end{proof}

%%%%%%%%%%%%%%%%%%%%%%%%%%%%%%%%%%%%%%%%%%%%%%%%%%%%%%%%%%%%%%%%%%%%%%%%%%%

%%%%%%%%%%%%%%%%%%%%%%%%%%%%%%%%%%%%%%%%%%%%%%%%%%%%%%%%%%%%%%%%%%%%%%%%%%%

\end{document}